\newtheorem{theorem}{Theorem}
\newtheorem{corollary}{Corollary}
\newtheorem{proposition}{Proposition}
\newtheorem{lemma}{Lemma}
\theoremstyle{definition}
\newtheorem{definition}{Definition}
\theoremstyle{remark}
\newtheorem{remark}{Remark}
\theoremstyle{definition}
\theoremstyle{definition}
\newtheorem{example}{Example}
\newcommand{\newsec}[1]{
\vspace{0.2cm}
\noindent \textbf{#1}
}
\newcommand{\R}{\mathbb{R}}
\renewcommand{\S}{\mathcal{S}}
\newcommand{\D}{\mathcal{D}}
\newcommand\norm[1]{\left\lVert#1\right\rVert}
\newcommand{\classK}{\mathcal{K}}
\newcommand{\q}{q}
\newcommand{\dq}{\dot{q}}
\newcommand{\alert}[1]{\textcolor{red}{#1}}
\newcommand\revtwo{}
\begin{document}

\theoremstyle{definition}
\theoremstyle{remark}
\theoremstyle{definition}
\theoremstyle{definition}

\title{Safety-Critical Kinematic Control of \\ Robotic Systems}
\author{Andrew Singletary$^1$, Shishir Kolathaya$^2$, and Aaron D. Ames$^1$
\thanks{$^1$Mechanical and Civil Engineering, Caltech, Pasadena, CA, USA, 91125.
        {\tt\small \{ames,asinglet\}@caltech.edu}}
\thanks{$^2$Cyber Physical Systems and Computer Science and Automation, IISc, Bengaluru, KA, India, 560012.
        {\tt\small \{shishirk\}@iisc.ac.in}}
}
\maketitle
\thispagestyle{empty}

\begin{abstract}

Over the decades, kinematic controllers have proven to be practically useful for applications like set-point and trajectory tracking in robotic systems. To this end, we formulate a novel safety-critical paradigm for kinematic control in this paper. In particular, we extend the methodology of control barrier functions (CBFs) to kinematic equations governing robotic systems. We demonstrate a purely kinematic implementation of a velocity-based CBF, and subsequently introduce a formulation that guarantees safety at the level of dynamics. This is achieved through a new form CBFs that incorporate kinetic energy with the classical forms, thereby minimizing model dependence and conservativeness. The approach is then extended to underactuated systems. This method and the purely kinematic implementation are demonstrated in simulation on two robotic platforms: a 6-DOF robotic manipulator, and a cart-pole system.
\end{abstract}


\section{Introduction}
\IEEEPARstart{K}{inematic} control provides a powerful method for achieving desired behaviors on a large class of robotic systems \cite{siciliano1990kinematic,xiang2010general,antonelli2006kinematic}. In a variety of applications, such as industrial manipulators and commercial drones, the low-level torque controllers are pre-configured, and end-users are only able to interface with the system through desired velocity inputs. Also, the dynamical model of the system is not well known by the users, making it difficult to provide guarantees on safety. Hence, these observations point to an important limitation in existing work on safeguarding control laws: there is a very high dependence on model, and there is a lack of general framework for automatically incorporating the velocity based low-level tracking with the high level control.

Ensuring safety via quadratic programming based control laws were popularized by \cite{ames17}, where safety constraints were incorporated via control barrier functions (CBFs). This was first applied to adaptive cruise control, and has since been utilized in a variety of application domains: automotive safety \cite{jankovic2018robust}, robotics \cite{cortez2020correct,cortez2019control,rauscher2016constrained} and multi-agent systems \cite{Wang17,lindemann2019control}. See \cite{ames2019control} for a recent survey. While CBFs can be implemented in a purely kinematic fashion for robotic systems \cite{landi2019safety}, as will be demonstrated in this work, it only guarantees safety kinematically, not the full robotic dynamics. 
As mentioned previously, the constraints are heavily dependent on the 
model, and safety cannot be guaranteed with model uncertainty.


Recently, energy-based reciprocal control barrier functions were introduced \cite{KolathayaEnergyCBF2020} as a means to provide robust safety guarantees for fully-actuated robotic platforms with model uncertainty. This was done by utilizing bounds on the inertia and Coriolis-centrifugal matrices, as well as the gravity vector, and providing safety guarantees for the worst-case scenario. While the resulting QP formulation yielded robustness in safety, it does not have well-defined behavior on the boundary of the set and outside of it, making it difficult to implement in practice. Therefore, in this paper, an alternative formulation for the energy-based CBFs is introduced for zeroing control barrier functions, which are well defined on the boundary, and exterior of the set. Additionally, in this formulation, the added conservatism inherent in reducing model dependence is minimized and analyzed. This method is then utilized to provide guarantees for velocity control inputs, using partial model information.
This analysis is then extended to the class of underactuated robotic systems. 
The results are demonstrated in a $6$-DOF manipulator and a cart-pole system (see Figs. \ref{fig:intro} and \ref{fig:cart_pole}), wherein different levels of uncertainties are incorporated and safety-critical kinematic control laws are applied.

\begin{figure}[t]
\centering
  \centering
  \includegraphics[width=.75\columnwidth]{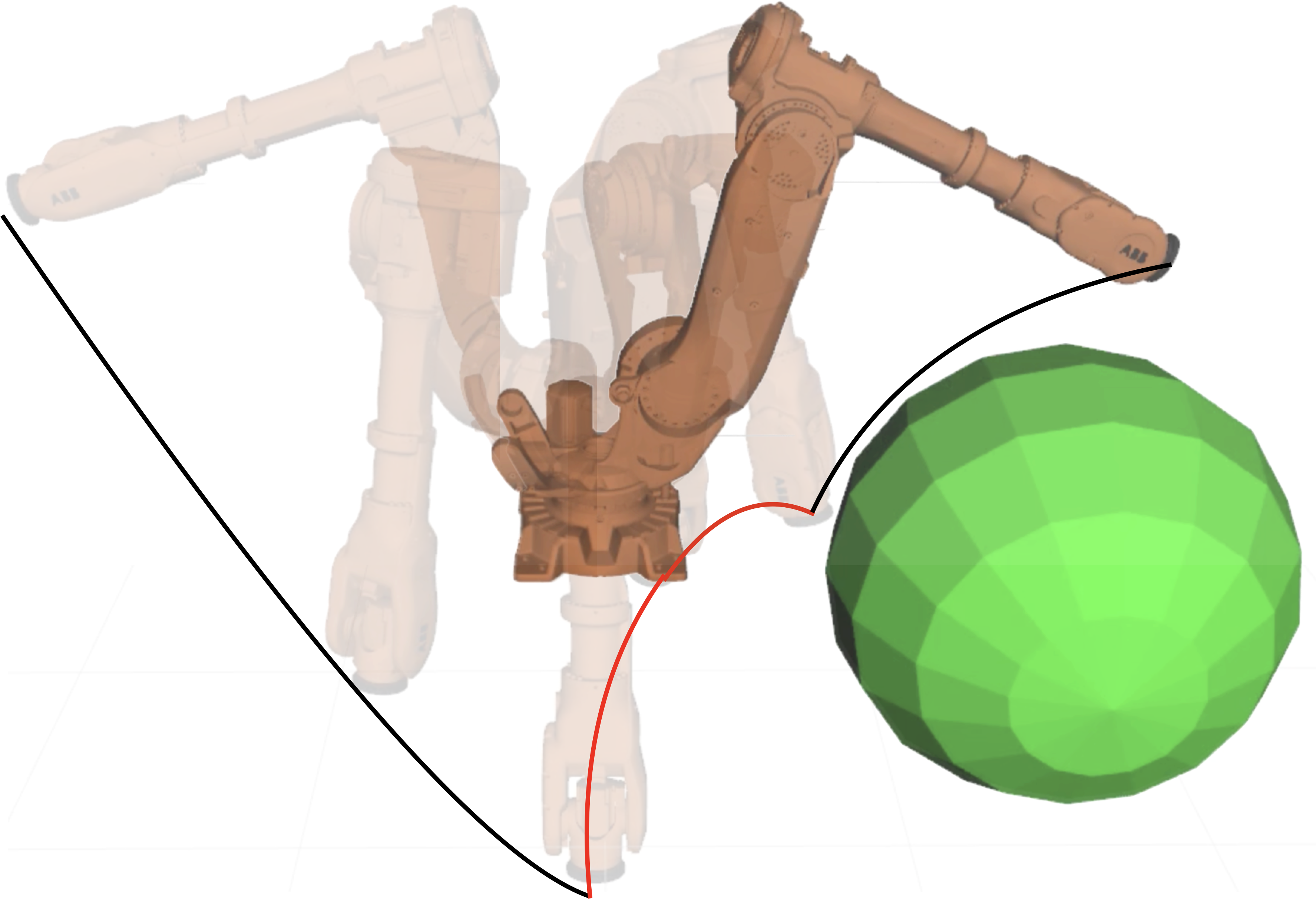}  
    \caption{A $6$-DOF manipulator safely avoiding an obstacle with energy-based control barrier function. The CBF intervention is shown in red.
    A video can be found at \cite{youtubeclip}. 
    }
    \label{fig:intro}
    \vspace{-0.45cm}
\end{figure}


This paper is structured as follows. Section \ref{sec:background} provides the necessary background on CBFs. Section \ref{sec:kinematic} demonstrates safety-critical velocity control of purely kinematic systems, with no regard for the underlying dynamics of the system. In Section \ref{sec:dynamics}, we begin with the formulation of an energy-based CBF that guarantees the safety of a robotic system at the dynamics level, using minimal model information. Then, this formulation is modified to guarantee safety of the dynamical system for kinematic control inputs, in this case a desired velocity command. The results are demonstrated in simulation on a 6 DOF robotic manipulator, and a comparison is made to the purely kinematic case. Finally, in Section \ref{sec:underactuated}, the underactuated case is explored, and the method is demonstrated with a simulation of a cart-pole system.

\section{Background}
\label{sec:background}
Consider a nonlinear control system in affine form: 
\begin{equation}
\label{eqn:controlsys}
    \dot{x} = f(x) + g(x) u
\end{equation}
where $x \in \D \subset \R^n$ is the state, and $u \in U \subseteq \R^m$ the input. 
Assume that the functions $f:\R^n\rightarrow \R^n$ and $g:\R^n\rightarrow \R^{n \times m}$ are continuously differentiable (therefore, for a Lipschitz continuous controller solutions exist and are unique).   We are interested in \emph{safety} defined as the forward invariance of a set $\S \subset \D$.  In particular, given a feedback control law $u = k(x)$ the resulting closed loop system $\dot{x} = f_{\rm cl}(x) = f(x) + g(x) k(x)$ with solution 
$x(t)$ and initial condition $x(0) = x_0$.  The system is \emph{safe} with respect to the controller $u = k(x)$ if: 
$$
\forall ~ x_0 \in \S \qquad \Rightarrow \qquad 
x(t) \in \S ~ \forall ~ t \geq 0. 
$$

\begin{definition}[\cite{ames17}]
\label{def:cbf}
Let $\S \subset \D \subset \R^n$ be the \underline{0-superlevel set} of a continuously differentiable function $h: \D \to \R$:
\begin{eqnarray}
\S & = & \{ x \in \R^n ~ : ~ h(x) \geq 0 \} , \nonumber\\
\partial \S & = & \{ x \in \R^n ~ : ~ h(x) = 0 \}, \nonumber\\
\mathrm{Int}(\S) & = & \{ x \in \R^n ~ : ~ h(x) > 0 \}. \nonumber
\end{eqnarray}
Then $h$ is a \underline{control barrier function (CBF)} if $\frac{\partial h}{\partial x}(x)\neq 0$ for all $x\in\partial \S$ and there exists an \emph{extended class $\classK$ function} (\cite[Definition 2]{ames17})
$\alpha$ such that for the control system \eqref{eqn:controlsys} and for all 
$ x \in \S$:
\begin{align}
\label{eqn:cbf:definition}
 \sup_{u \in U}  \Big[ \underbrace{L_f h(x) + L_g h(x) u}_{\dot{h}(x,u)} \Big] \geq - \alpha(h(x)).
\end{align}
where $L_f h(x) = \frac{\partial h}{\partial x} f(x)$ and $L_g h(x) = \frac{\partial h}{\partial x} g(x)$.  We say that $h$ is a \underline{control barrier function (CBF) on $\S$} if \eqref{eqn:cbf:definition} holds for all $x \in \S$ (but not necessarily on all of $D$). 
\end{definition}


The main result with regard to control barrier functions, is that the existence of a control barrier function implies that the control system is safe:

\begin{theorem}[\cite{ames17}]
\label{thm:cbf}
{\it Let $\S \subset \R^n$ be a set defined as the superlevel set of a continuously differentiable function $h: \D \subset \R^n \to \R$.  If $h$ is a control barrier function (CBF) on $\S$, 
then any Lipschitz continuous controller satisfying: $\dot{h}(x,u(x)) = L_f h(x) + L_g h(x) u(x) \geq - \alpha(h(x)) $ renders the set $\S$ safe for the system \eqref{eqn:controlsys}.  Additionally, if $h$ is a CBF on $\D$, the set $\S$ is asymptotically stable in $\D$. }
\end{theorem}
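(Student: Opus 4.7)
The plan is to reduce the theorem to a scalar comparison argument along trajectories. Set $y(t) = h(x(t))$ for a solution $x(t)$ of the closed-loop system with $u=u(x)$ Lipschitz. Along solutions, the chain rule and the assumed CBF inequality give
\begin{equation*}
\dot{y}(t) = L_f h(x(t)) + L_g h(x(t))\, u(x(t)) \geq -\alpha(y(t)).
\end{equation*}
Thus the proof reduces to analyzing the scalar differential inequality $\dot{y} \geq -\alpha(y)$ where $\alpha$ is an extended class $\classK$ function, in particular continuous with $\alpha(0)=0$ and strictly increasing.

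For the forward invariance claim, I would compare $y(t)$ to the solution $z(t)$ of the scalar ODE $\dot{z} = -\alpha(z)$ with $z(0) = y(0) = h(x_0) \geq 0$. Since $\alpha(0) = 0$, the set $\{z \geq 0\}$ is invariant for the scalar ODE (by uniqueness, $z\equiv 0$ is the equilibrium and no trajectory starting at $z(0) \geq 0$ can cross it downward because $\dot{z}|_{z=0}=0$). Invoking the standard comparison lemma for differential inequalities yields $y(t) \geq z(t) \geq 0$ for all $t \geq 0$ on the maximal interval of existence, which translates to $x(t) \in \S$ and hence forward invariance of $\S$.

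For the asymptotic-stability addendum, since $h$ is assumed to be a CBF on all of $\D$, the same inequality $\dot{y} \geq -\alpha(y)$ holds whenever $x(t) \in \D$. If $x_0 \in \D \setminus \S$, then $y(0) < 0$, and on $\{z < 0\}$ we have $-\alpha(z) > 0$ because $\alpha$ is strictly increasing through zero. The scalar ODE solution $z(t)$ therefore increases monotonically toward $0$, and the same comparison lemma yields $y(t) \geq z(t) \to 0^-$, so $h(x(t)) \to 0$ from below and $x(t)$ approaches $\S$. Combining this attractivity with Lyapunov-type stability of $\S$ (starting with $y(0)$ close to $0$ forces $z(t)$, and hence $y(t)$, to stay close to $0$ by monotonicity of $z$ in each half-line) gives asymptotic stability of $\S$ relative to $\D$.

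The only real subtlety is invoking the comparison lemma: the right-hand side $-\alpha(\cdot)$ is merely continuous (not necessarily Lipschitz), so one cannot appeal to the textbook Lipschitz version directly. I would handle this either by approximating $\alpha$ from above by Lipschitz functions and passing to the limit, or by citing a comparison result that only requires continuity together with uniqueness of the equilibrium trajectory through $0$ (which holds here because all solutions $z(t)$ starting at $z(0) \neq 0$ stay on the same side of $0$ thanks to the sign of $-\alpha$). This one-sided uniqueness is exactly what prevents the comparison from failing and is the main technical point I would flag in the write-up; the rest is bookkeeping.
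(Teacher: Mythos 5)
This theorem is quoted verbatim from \cite{ames17}; the paper under review gives no proof of it, so there is no in-paper argument to compare against. Your comparison-lemma strategy is in fact the standard route taken in the cited literature (\cite{ames17,xu2015robustness}): reduce to the scalar inequality $\dot y \geq -\alpha(y)$ with $y = h\circ x$ and compare against $\dot z = -\alpha(z)$. The technical caveat you flag is the right one --- $-\alpha$ is only continuous, so one must either compare against the maximal/minimal solution of the comparison equation or regularize $\alpha$ --- and your observation that solutions of $\dot z=-\alpha(z)$ cannot cross zero is what makes that work.

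There is, however, one genuine gap as the argument is written: the opening step asserts $\dot y(t) \geq -\alpha(y(t))$ ``along solutions,'' i.e.\ for all $t$, but under the hypothesis actually used for the safety clause ($h$ a CBF \emph{on} $\S$, with the controller inequality only available for $x \in \S$) this is circular --- the inequality is only known to hold while $x(t) \in \S$, which is precisely what you are trying to prove. The comparison argument closes cleanly only when the controller satisfies the constraint on an open set containing $\S$ (as in \cite{ames17}, where the condition is imposed on $\D$): then let $\tau$ be the first time $y$ goes negative, note $y(\tau)=0$, and rerun the comparison on a short interval past $\tau$ on which the trajectory still lies in that open set, contradicting the definition of $\tau$. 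If the inequality really is available only on $\S$ itself, pointwise information at the crossing time ($\dot y(\tau)\geq-\alpha(0)=0$) does not prevent $y$ from dipping negative (consider $y(t)=-(t-\tau)^2$), and one must instead invoke a Nagumo/Brezis-type invariance theorem: the Lipschitz closed loop gives uniqueness of solutions, $\frac{\partial h}{\partial x}\neq 0$ on $\partial\S$, and the constraint gives the subtangentiality condition $\dot h \geq 0$ on $\partial\S$. You should either add that hypothesis shift or that tool explicitly. A smaller point: for the asymptotic-stability clause your squeeze argument shows $h(x(t))\to 0$ (or that the trajectory enters $\S$ in finite time, after which invariance applies); concluding stability and attractivity of the \emph{set} $\S$ in the distance sense requires either mild additional regularity of $h$ or adopting the $h$-level characterization of set stability, which is the convention in \cite{ames17} --- worth a sentence in the write-up.
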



\newsec{Controller Synthesis.}  The main idea underlying controller synthesis with barrier functions is to use them as \emph{safety filters} which take in a desired control input, $u_{\rm des}(x,t)$, and modifies this input in a minimal way so as to guarantee safety.  This can be formalized as an optimization based controller, and specifically a Quadratic Program (QP).  Specifically: 
\begin{align}
\label{eq:cbfqp}
\tag{CBF-QP}
u^*(x) = \underset{u \in U \subseteq \R^{m}}{\operatorname{argmin}} &  \quad  \| u - u_{\rm des}(x,t) \|^2  \\
\mathrm{s.t.} &  \quad L_f h(x) + L_g h(x) u \geq - \alpha(h(x))  \nonumber
\end{align}
Importantly, this controller has an explicit solution as noted by the following lemma.

\begin{lemma}
\label{lem:explicitCBFsol}
Let $h$ be a control barrier function for the control system \eqref{eqn:controlsys} and assume that $U = \R^m$.  Then the explicit solution to the QP \eqref{eq:cbfqp} is given by: 
\begin{eqnarray}
\label{eqn:minnormexplicit}
~  u^*(x,t)  =  u_{\rm des}(x,t) + u_{\rm safe}(x,t) 
\end{eqnarray}
where $u_{\rm safe}$ minimally modifies $u_{\rm des}$ depending on if the nominal controller keeps the system safe, i.e., the sign of $\Psi(x,t ; u_{\rm des}) := \dot{h}(x,u_{\rm des}(x,t))  + \alpha(h(x))$, according to:
\begin{gather}\label{eq:firstsolutiontoqp}
u_{\rm safe}(x,t) =  \left\{ 
\begin{array}{lcr}
- \frac{L_g h(x)^{T}}{L_g h(x) L_g h(x)^T} \Psi(x,t ;u_{\rm des}) & \mathrm{if~} \Psi(x,t ;u_{\rm des}) < 0 \\
0 & \mathrm{if~} \Psi(x,t ;u_{\rm des}) \geq 0
\end{array}
\right.  
\end{gather}
\end{lemma}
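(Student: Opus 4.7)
The plan is to recognize that \eqref{eq:cbfqp} is a quadratic program with a quadratic objective in $u$ and a single affine inequality constraint in $u$, so its minimizer has an explicit closed form that can be obtained either by the KKT conditions or by a direct projection argument. I will split into the two cases that appear in \eqref{eq:firstsolutiontoqp} according to the sign of $\Psi(x,t;u_{\rm des})=L_f h(x) + L_g h(x)u_{\rm des}(x,t)+\alpha(h(x))$, which measures by how much the nominal input $u_{\rm des}$ already satisfies the CBF constraint.

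In the first case, $\Psi(x,t;u_{\rm des})\geq 0$, I would observe that $u=u_{\rm des}$ is feasible and is simultaneously the unconstrained minimizer of the strictly convex cost $\|u-u_{\rm des}\|^2$, hence it is the unique minimizer of the constrained problem; this gives $u_{\rm safe}=0$. In the second case, $\Psi(x,t;u_{\rm des})<0$, the nominal input is infeasible and by strict convexity the optimizer must lie on the boundary $L_f h(x)+L_g h(x)u+\alpha(h(x))=0$. Writing the Lagrangian $\mathcal L(u,\lambda)=\|u-u_{\rm des}\|^2 - \lambda\bigl(L_f h + L_g h\,u+\alpha(h)\bigr)$ and setting $\nabla_u \mathcal L=0$ yields
\begin{equation*}
u^* = u_{\rm des} + \tfrac{1}{2}\lambda\, L_g h(x)^T .
\end{equation*}
Substituting this into the active constraint and solving for $\lambda$ gives
\begin{equation*}
\tfrac{1}{2}\lambda = - \frac{\Psi(x,t;u_{\rm des})}{L_g h(x)\, L_g h(x)^T},
\end{equation*}
and plugging back produces exactly the expression for $u_{\rm safe}$ in \eqref{eq:firstsolutiontoqp}. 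The nonnegativity of $\lambda$ follows from $\Psi<0$, verifying the KKT conditions and confirming optimality.

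The only subtlety, and the step I expect to require a bit of care, is ensuring that the denominator $L_g h(x)\, L_g h(x)^T$ does not vanish in the second case. Here I would invoke the hypothesis that $h$ is a CBF: by \eqref{eqn:cbf:definition} there exists some $u\in\R^m$ with $L_f h(x)+L_g h(x)u \geq -\alpha(h(x))$, so if $L_g h(x)=0$ then $L_f h(x)+\alpha(h(x))\geq 0$, which would force $\Psi(x,t;u_{\rm des})\geq 0$ for every $u_{\rm des}$ and place us in the first case. Thus whenever $\Psi<0$ one automatically has $L_g h(x)\neq 0$ and the closed form \eqref{eqn:minnormexplicit}--\eqref{eq:firstsolutiontoqp} is well defined. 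Combining the two cases yields the claimed explicit solution.
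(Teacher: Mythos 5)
Your proposal is correct and takes essentially the same route as the paper: the paper's proof simply invokes the KKT-based closed form from \cite{xu2015robustness} with the stationarity condition shifted by $u_{\rm des}$, which is exactly the computation you carry out explicitly. Your additional check that $L_g h(x) = 0$ forces $\Psi \geq 0$ via the CBF inequality is a sound way to handle well-definedness of the denominator, and mirrors the argument the paper later makes inside the proof of Theorem \ref{thm:fully_actuated}.
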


\begin{proof}
In \cite{xu2015robustness}, an explicit form for \eqref{eq:cbfqp} was found using the KKT conditions when $u_{\rm des}(x,t) = 0$.  The same proof with a modified cost yields the desired result.  Specifically, the dual-primal feasibility and complementary slackness conditions remain unchanged.  The stationary condition becomes: 
$
u^*(x,t) = \mu(x) L_g h(x)^T  + u_{\rm des}(x,t). 
$
Utilizing this in the proof results in the closed form solution.  Finally, safety is guaranteed from Theorem \ref{thm:cbf}.
\end{proof}

\section{Safety-Critical Kinematic Control}
\label{sec:kinematic}
In this section, we consider safety-critical kinematic control. We provide an example of velocity-based kinematic control of a robotic manipulator, and analyze its ability to maintain safety of the dynamical system.

In the context of kinematic control of robotic systems, we are interested in kinematic mappings of the form: 
$x = y(q)$ where $q \in Q \subset \R^k$, $x \in \D \subset \R^n$ and thus $y : Q \to \D$.  Here, we assume that $k \geq n$, i.e., that there are more degrees of freedom than tasks. 
Here $x$ is the vector of ``outputs'' or ``task'' variables, i.e., a vector of elements which we wish to control, and $q$ is a vector consisting of the systems configuration, e.g., angles of the robotic system.  The evolution of the task variables is therefore given by: 
\begin{eqnarray}
\label{eqn:firstorderkin}
\dot{x} = J_{y}(q) \dot{q}. 
\end{eqnarray}
In kinematic control, we view $\dot{q}$ as 
the input to the system.  Specifically, we wish to determine a feedback control law:  $\dot{q} = K(q,t)$ that achieves the desired properties.  

\newsec{Kinematic Trajectory Tracking.}  Suppose that we have a desired trajectory $x_d(t)$ for the task vector.  The goal is to track this trajectory, i.e., for $e(t) = x(t) - x_d(t) \to 0$ with $x(t)$ satisfying \eqref{eqn:firstorderkin}. Differentiating this yields: 
$$
\dot{e} = J_y(q) \dot{q} - \dot{x}_d(t)
$$
Therefore, for $\gamma > 0$, if we pick:
$$
J_y(q)\dot{q} = \dot{x}_d(t)  - \gamma e \:\: \Rightarrow \:\:
\dot{e} = - \gamma e 
\:\: \Rightarrow \:\:
e(t) = \exp(-\gamma t) e(0),
$$
and the end result is a stable linear system on the output.  As a result, if we wish to track a trajectory, we can pick: 
\begin{align}
\label{eqn:qdotdestrajectory}
\dot{q}_{\rm des}(x,t) = J_y(q)^{\dagger}\left(\dot{x}_d(t) - \lambda e\right),
\end{align}
with $J_y(q)^{\dagger} = J_y(q)^{T}  (J_y(q) J_y(q)^T)^{-1}$, the Moore-Penrose (right) pseudoinverse, assumed to be well defined.
This results in exponential stability in the error dynamics: $e(t) \leq \exp(-\lambda t) e(0)$.


\newsec{Safety-Critical Control.}  We will now study safety-critical velocity based control. We have the following.


\begin{lemma}
\label{lem:velocity}
Consider a kinematic safety constraint $h : Q \subset \R^k \to \R$ and the corresponding safe set $\S = \{ q \in Q ~ : ~ h(q) \geq 0\}$ defined as the 0-superlevel set of $h$.  If $J_h(q) \neq 0$, then the following velocity based controller:
\begin{align}
\label{eqn:QPvelocity}
\dot{q}^*(q,t) = \underset{\dot{q} \in \R^k}{\operatorname{argmin}} & ~  ~ \|  \dot{q} - J_y(q)^{\dagger} \left(\dot{x}_d(t) - \lambda (y(q) - x_d(t)) \right) \|^2 \nonumber \\
\mathrm{s.t.} & ~  ~ \dot{h}(q,\dot{q}) =  J_h(q) \dot{q} \geq - \alpha(h(q)), 
\end{align}
ensures safety, i.e., $\S$ is forward invariant.  Moreover, this has a closed form solution given by \begin{gather}
\begin{split}
\label{eqn:minnormexplicitsimple}
 ~  \dot{q}^*(x,t)  = \dot{q}_{\rm des}(q,t) + 
\left\{ 
\begin{array}{lcr}
-J_h(x)^{\dagger} \Psi(q,t; \dot{q}_{\rm des})  & \mathrm{if~} \Psi(q,t;\dot{q}_{\rm des}) < 0 \\
0 & \mathrm{if~} \Psi(q,t;\dot{q}_{\rm des})  \geq 0
\end{array}
\right.  
\end{split}
\end{gather}
where $\Psi(x,t;\dot{q}_{\rm des}) = J_h(q) \dot{q}_{\rm des}(q,t)  + \alpha(h(q)) $.
\end{lemma}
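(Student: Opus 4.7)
The plan is to recognize that Lemma \ref{lem:velocity} is essentially Lemma \ref{lem:explicitCBFsol} specialized to a trivial control-affine system in which $\dot{q}$ itself plays the role of the input. Concretely, I would view the kinematic equation $\dot{q} = \dot{q}$ as the control system \eqref{eqn:controlsys} with state $q$, drift $f(q) \equiv 0$, input matrix $g(q) \equiv I_k$, and control $u = \dot{q} \in \R^k$. Under this identification, the Lie derivatives collapse to $L_f h(q) = 0$ and $L_g h(q) = J_h(q)$, and the CBF inequality \eqref{eqn:cbf:definition} becomes exactly the constraint $J_h(q) \dot{q} \geq -\alpha(h(q))$ appearing in the QP \eqref{eqn:QPvelocity}.

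Next I would verify that $h$ is a CBF on $\S$ in this reformulated system. Since $U = \R^k$ is unconstrained and $J_h(q) \neq 0$ by hypothesis, the map $\dot{q} \mapsto J_h(q)\dot{q}$ is surjective onto $\R$, so the required supremum in \eqref{eqn:cbf:definition} is $+\infty$ and the CBF condition holds trivially on all of $\S$ (indeed on all of $Q$ where $J_h \neq 0$). Invoking Theorem \ref{thm:cbf}, any Lipschitz continuous selection $\dot{q}(q,t)$ satisfying $J_h(q)\dot{q} \geq -\alpha(h(q))$ renders $\S$ forward invariant. In particular, the QP minimizer $\dot{q}^*(q,t)$ in \eqref{eqn:QPvelocity}, whenever it is Lipschitz, guarantees safety.

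For the closed-form expression, I would apply Lemma \ref{lem:explicitCBFsol} directly with $u_{\rm des} = \dot{q}_{\rm des}(q,t) = J_y(q)^{\dagger}(\dot{x}_d(t) - \lambda(y(q) - x_d(t)))$. The formula \eqref{eq:firstsolutiontoqp} then produces a correction term proportional to $L_g h(q)^T (L_g h(q) L_g h(q)^T)^{-1} = J_h(q)^T (J_h(q) J_h(q)^T)^{-1}$, which is precisely the Moore--Penrose pseudoinverse $J_h(q)^{\dagger}$ (well defined because $J_h(q) \neq 0$ makes $J_h J_h^T > 0$). Substituting $\Psi(q,t;\dot{q}_{\rm des}) = J_h(q)\dot{q}_{\rm des}(q,t) + \alpha(h(q))$ (the analogue of the definition in Lemma \ref{lem:explicitCBFsol}, since $L_f h \equiv 0$) gives exactly the piecewise formula \eqref{eqn:minnormexplicitsimple}.

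The step that requires any care, rather than a true obstacle, is justifying that Lemma \ref{lem:explicitCBFsol}---which was stated and proved via KKT for a scalar row vector $L_g h$ acting on $u \in \R^m$---carries over verbatim when the ``input'' has dimension $k$ and the ``$L_g h$'' is the row vector $J_h(q) \in \R^{1 \times k}$; this is immediate because the KKT argument in \cite{xu2015robustness} only uses that $L_g h L_g h^T$ is a positive scalar, which holds here whenever $J_h(q) \neq 0$. Safety of the resulting closed-loop system then follows from Theorem \ref{thm:cbf}, completing the argument.
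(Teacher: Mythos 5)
Your argument is correct and is exactly the route the paper intends: the paper omits an explicit proof of Lemma \ref{lem:velocity} precisely because it is the specialization of Lemma \ref{lem:explicitCBFsol} and Theorem \ref{thm:cbf} to the single-integrator system $\dot q = u$ with $L_f h = 0$, $L_g h = J_h(q)$, and $J_h^T(J_h J_h^T)^{-1} = J_h^\dagger$. Your added check that $J_h(q)\neq 0$ makes $J_h J_h^T > 0$ (so the formula and the CBF condition are well posed) is the only detail needed, and you handled it correctly.
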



Therefore, the controller \eqref{eqn:minnormexplicitsimple} utilizes $\dot{q}_{\rm des}$ whenever it is safe, i.e., when $\Psi(q,t;\dot{q}_{\rm des}) \geq 0$.  Conversely, in the case when $\dot{q}_{\rm des}$ is unsafe the controller takes over and enforces $\dot{h} = J_h(q) \dot{q}^*(q,t) = - \alpha(h)$ until $\dot{q}_{\rm des}$ is safe again.



\begin{example}[\textbf{Manipulator Obstacle Avoidance}]
\label{ex:manipulator_kinematic_only}
Consider a 6-DOF industrial manipulator (see Fig. \ref{fig:intro}) attempting to track a desired trajectory $x_d(t)$ using the desired velocity given in \eqref{eqn:qdotdestrajectory} with its end-effector. 
Note that CBFs have been successfully applied to robot manipulators in \cite{cortez2019control,rauscher2016constrained,landi2019safety} via kinematic control. Suppose that the manipulator needs to complete this trajectory while avoiding an obstacle located at $(x_0,y_0,z_0)$. Thus, in the set $\mathcal{S} = \{q \mid h(q) \geq 0\}$, the end-effector must be at least a distance $d$ from the obstacle.
A control barrier function representing this safety constraint is
\begin{align}
\label{eq:manip_h}
    h(x) = (x-x_0)^2 + (y-y_0)^2 + (z-z_0)^2 - d^2.
\end{align}
Its Jacobian is given by
\begin{align}
    J_h(q) = \frac{dh}{dx} \frac{dx}{dq} = \begin{bmatrix}
    2x-2x_0 &
    2y-2y_0 &
    2z-2z_0
    \end{bmatrix} J,\nonumber
\end{align}
where $J=\frac{dq}{dx} \in \R^3 \times \R^6$ is the top three rows of the manipulator Jacobian.

By substituting this into \eqref{eqn:QPvelocity} or \eqref{eqn:minnormexplicitsimple}, we obtain the results shown in Figure \ref{fig:v_fig}. Since this CBF does not take into account the system dynamics or the tracking ability of the low-level controller, safety is not guaranteed, but it can be achieved by proper choice of $\alpha$. In this case, with scalar multiple $\alpha \in [0.5,1]$, 
the obstacle is avoided, but not for $\alpha \in [2,3]$.

\begin{figure}[b]
    \centering
    \includegraphics[trim={1.25cm 0cm 0cm 0cm},clip,width=0.5\textwidth]{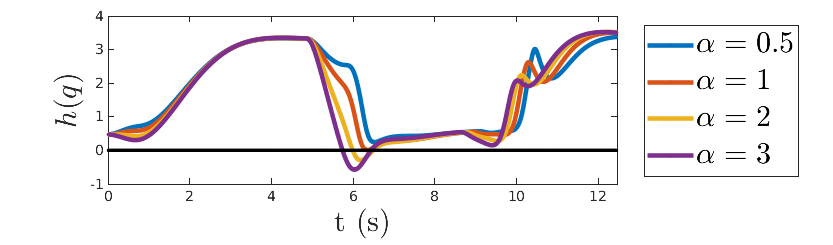}
    \caption{Velocity-based kinematic barrier function on the 6 DOF manipulator. Safety depends on choice of $\alpha$.}
    \label{fig:v_fig}
\end{figure}


\end{example}

\section{From Kinematics to Dynamics}
\label{sec:dynamics}

We now wish to establish the main result of this paper: that 
guarantees safety for the \emph{dynamics} of a robotic system. To do this, we first introduce an alternative formulation of the energy-based CBFs shown in \cite{KolathayaEnergyCBF2020} for robotic systems.

\newsec{Robotic Systems.}  We consider a robotic (or mechanical) system with configuration coordinates $q \in Q\subset \R^k$ and equations of motion:
\begin{eqnarray}
\label{eqn:roboticsys}
D(q) \ddot{q} + C(q,\dot{q}) \dot{q} + G(q) = B u
\end{eqnarray}
where $B \in \R^{k \times m}$ is the actuation matrix.  We assume $m \leq k$, wherein $m = k$ with $B$ invertible corresponds to full actuation.
From the equations of motion, we can obtain a control system of the form \eqref{eqn:controlsys}.  We will first discuss the fully actuated case, and the underactuated case will be discussed in Section \ref{sec:underactuated}.


\newsec{Energy-based Safety Constraints.}
We begin by formulating a safety-critical controllers for fully actuated robotic systems given kinematic safety constraints---thus bridging the divide from kinematic to dynamics.  This will be achieved via a ``dynamically consistent'' extension to the desired safe set. This is similar to the extensions shown in \cite{nguyen2016exponential}, \cite[Section IV]{ames2019control} for higher relative degree systems, but leverages the kinetic energy of the system.
Specifically, to dynamically extend the CBF, we note that the inertia matrix, $D(q)$ is a symmetric positive definite matrix, $D(q) = D(q)^T \succ 0$, and thus:
$$
\lambda_{\min}(D(q)) \| q \|^2 \leq q^T D(q) q \leq
\lambda_{\max}(D(q)) \| q \|^2
$$
where $\lambda_{\min}$ and $\lambda_{\max}$ are the maximum and minimum eigenvalues (which are dependent on $q$) of $D(q)$ which are necessarily positive due to the positive definite nature of $D(q)$.

\begin{definition}\label{def:energybconstraint}
Given a kinematic safety constraint expressed as a function $h : Q \subset \R^k \to \R$ only dependent on $q$, and the corresponding safe set: $\S = \{ (q,\dot{q}) \in Q \times \R^k ~ : ~ h(q) \geq 0\}$, the associated \textbf{energy-based safety constraint} is defined as:
\begin{eqnarray}
\label{eqn:dynCBF}
h_D(q, \dot{q}) := - \frac{1}{2} \dot{q}^T D(q) \dot{q} + \alpha_e h(q) \geq 0
\end{eqnarray}
with $\alpha_e > 0$.  The corresponding \textbf{energy-based safe set} is:
\begin{align}
\label{eqn:SD}
\S_D & : =   \{ (q,\dot{q}) \in Q \times \R^k ~ : ~ h_D(q,\dot{q}) \geq 0 \} .
\end{align}
\end{definition}

This construction is similar to the augmentation of kinetic energy in \cite{KolathayaEnergyCBF2020} for reciprocal control barrier functions.
While the reciprocal formulation has the advantage of having no added conservatism, due to the set remaining unchanged, it does not have well-defined behavior on the boundary of the set and outside of it, making it less popular for implementation. In fact, we now will show that the energy based constraint in Definition \ref{def:energybconstraint} is a valid (zeroing) control barrier function (CBF), thereby allowing for a new class of QPs that guarantee safety.  First, we establish the relationship between $\S_D$ and $\S$.

\begin{proposition}
Consider a kinematic safety constraint, $h: Q \subset \R^k \to \R$, with corresponding safe set $\S$, and the associated energy-based safety constraint, $h_D$, as given in Definition \ref{def:energybconstraint} with corresponding safe set $\S_D$.  Then
\begin{eqnarray}
\mathrm{(i)} ~ \S_D \subset \S, \qquad \qquad \mathrm{(ii)} ~  \mathrm{Int}(\S) \subset \lim_{\alpha_e \to \infty} \S_D  \subset \S.
\end{eqnarray}
\end{proposition}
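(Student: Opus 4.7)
The plan is to exploit one fact throughout: $D(q) \succ 0$ implies $\dot q^{T} D(q)\dot q \ge 0$ for every $(q,\dot q)$, with equality iff $\dot q=0$. Both inclusions then reduce to rearranging the defining inequality of $h_D$.

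For part (i), suppose $(q,\dot q) \in \S_D$, so that $-\tfrac{1}{2}\dot q^{T} D(q)\dot q + \alpha_e h(q) \ge 0$. Rearranging,
\begin{equation*}
\alpha_e h(q) \;\ge\; \tfrac{1}{2}\dot q^{T} D(q)\dot q \;\ge\; 0.
\end{equation*}
Since $\alpha_e > 0$ by assumption, this yields $h(q) \ge 0$, i.e. $(q,\dot q) \in \S$. This establishes $\S_D \subset \S$.

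For part (ii), I first want to interpret $\lim_{\alpha_e \to \infty}\S_D$ as the monotone union $\bigcup_{\alpha_e>0}\S_D(\alpha_e)$. To justify this, note that if $(q,\dot q)\in\S_D(\alpha_e)$, then by part (i) we already have $h(q)\ge 0$; hence for any $\alpha_e'\ge\alpha_e$, $\alpha_e' h(q) \ge \alpha_e h(q) \ge \tfrac{1}{2}\dot q^{T} D(q)\dot q$, so $(q,\dot q)\in\S_D(\alpha_e')$. Thus the family $\{\S_D(\alpha_e)\}_{\alpha_e>0}$ is nested and increasing in $\alpha_e$, and the limit is the union. The upper inclusion $\lim_{\alpha_e\to\infty}\S_D \subset \S$ then follows immediately from part (i) applied pointwise.

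For the remaining inclusion $\mathrm{Int}(\S)\subset\lim_{\alpha_e\to\infty}\S_D$, I would pick $(q,\dot q)\in\mathrm{Int}(\S)$, so $h(q)>0$, and simply choose
\begin{equation*}
\alpha_e \;>\; \frac{\dot q^{T} D(q)\dot q}{2\,h(q)},
\end{equation*}
which is well defined because $h(q)>0$. Then $\alpha_e h(q) > \tfrac{1}{2}\dot q^{T} D(q)\dot q$, so $(q,\dot q)\in\S_D(\alpha_e)$, and therefore lies in the union. The only mild subtlety, and the one worth being careful about, is this interpretation of the ``limit'' of a family of sets indexed by $\alpha_e$; once that is pinned down as the monotone union, the rest is a one-line computation for each inclusion. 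Note that the inclusion $\mathrm{Int}(\S)\subset\lim_{\alpha_e\to\infty}\S_D$ is strict in general, since $\partial\S\cap\{\dot q=0\}$ lies in $\S_D(\alpha_e)$ for every $\alpha_e>0$, which is consistent with the rightmost set being $\S$ rather than $\mathrm{Int}(\S)$.
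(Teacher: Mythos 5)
Your proposal is correct and follows essentially the same route as the paper: part (i) by rearranging the defining inequality of $h_D$ and using positive definiteness of $D(q)$, and part (ii) by observing that $\S_D(\alpha_e)$ is monotone in $\alpha_e$, interpreting the limit as the nested union, and covering each point of $\mathrm{Int}(\S)$ by choosing $\alpha_e$ large enough. Your version merely makes the pointwise choice of $\alpha_e$ explicit where the paper argues via a divergent sequence, which is a cosmetic difference.
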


\begin{proof}
To establish (i), we simply note that:
\begin{align}
\S_D  \subset  \{ (q,\dot{q}) \in Q \times \R^k ~ : ~ h(q)  \geq \frac{1}{2} \frac{\lambda_{\min}(D(q))}{\alpha_e} \| \dot{q}  \|^2 \}  \subset \S   \nonumber
\end{align}
which follows from the fact that: $\frac{1}{2} \lambda_{\min}(D(q)) \| \dot{q}  \|^2 \geq 0$.  To establish (ii), we first note that
\begin{align}
    S_D(\alpha_e) &= \{ (q,\dot{q}) \in Q \times \R^k ~ : ~ h_D(q, \dot{q})\geq 0\}  \nonumber\\
    &=\{ (q,\dot{q}) \in Q \times \R^k ~ : ~ h(q) \geq \frac{\frac{1}{2}\dot{q}^TD(q)\dot{q}}{\alpha_e}\} \nonumber
\end{align}
where here we made the dependence of $S_D$ on $\alpha_e$ explicit.  Consider an increasing sequence $\alpha_e^i$ where $i \in \mathbb{N}$ and $\lim_{i \to \infty}  \alpha_e^i \to \infty$.  This results is a nondecreasing sequence of sets: $\{\S_D(\alpha_e^i)\}_{i = 1}^{\infty}$:
\begin{eqnarray}
\alpha_e^i < \alpha_e^{i+1} \quad & \Rightarrow &  \quad  \frac{\frac{1}{2}\dot{q}^TD(q)\dot{q}}{\alpha_e^i}  > \frac{\frac{1}{2}\dot{q}^TD(q)\dot{q}}{\alpha_e^{i+1}} \nonumber\\
\quad & \Rightarrow &  \quad
\S_D(\alpha_e^i) \subset \S_D(\alpha_e^{i+1}). \nonumber
\end{eqnarray}
As a result:
\begin{gather*}
\lim_{i \to \infty}  \frac{\frac{1}{2}\dot{q}^TD(q)\dot{q}}{\alpha_e^i} = 0 ~ \Rightarrow ~
\lim_{i \to \infty} \S_D(\alpha_e^i) = \bigcup_{i \in \mathbb{N}} \S_D(\alpha_e^i) \supset \mathrm{Int}(\S)
\end{gather*}
and $\S_D(\alpha_e^i) \subset \S$ for all $i \in \mathbb{N}$.
\end{proof}

\newsec{Main result.}  We now have the necessary constructions to present the main result of this paper---a largely model independent safety-critical controller that ensures the forward invariance of $\S_D$ and, therefore, $\S$ in the limit for $\alpha_e$ sufficiently large.  We will establish this by showing that $h_D$ is a valid CBF and that $\dot h_D$ only depends on the kinematics, the gravity vector $G(q)$, and the inertial matrix $D(q)$.  This makes the controller more robust to uncertainty in the dynamics than full model based controllers---which would require knowledge of the Coriolis matrix, $C(q,\dot{q})$.

\begin{theorem} \label{thm:fully_actuated}
  Consider a robotic system \eqref{eqn:roboticsys}, assumed to be fully actuated with $B$ invertible, and a kinematic safety constraint $h : Q \to \R$ with corresponding safe set $\S = \{(q,\dot{q}) \in Q \times \R^k  :  h(q) \geq 0\}$.  Let $h_D$ be the energy based constraint defined as in \eqref{eqn:dynCBF} with corresponding safe set $\S_D$ as given in \eqref{eqn:SD}.
  Then $h_D$ is a control barrier function on $\S_D$ and given a desired controller $u_{\rm des}(x,t)$, the following controller for all $(q,\dot{q}) \in \S_D$:
\begin{gather}
\label{eqn:QProbotic}
\begin{split}
u^*(q,\dot{q},t) =  ~ \underset{u \in \R^m}{\operatorname{argmin}}  ~ & ~ \| u - u_{\rm des}(q,\dot{q},t) \|^2  \\
~  \mathrm{s.t.}   ~ &  ~\underbrace{- \dot{q}^T B u + G(q)^T \dot{q} + \alpha_e J_h(q)\dot{q}  }_{\dot{h}_D(q,\dot{q},u)} \geq - \alpha(h_D(q,\dot{q})),
\end{split}
\end{gather}
guarantees forward invariance of $\S_D$, i.e., safety of $\S_D$.  Additionally, it has a closed form solution:
\begin{gather}
\begin{split}
\label{eqn:minnormexplicitrobotic}
  u^*(x,t)  = u_{\rm des}(q,\dot{q},t) +
\left\{
\begin{array}{lcr}
\frac{B^T \dot{q}}{\|B^T  \dot{q} \|^2} \Psi(x,t;u_{\rm des})  & \mathrm{if~} \Psi(x,t; u_{\rm des}) < 0 \\
0 & \mathrm{if~} \Psi(x,t; u_{\rm des})  \geq 0
\end{array}
\right.
\end{split}
\end{gather}
where
\begin{gather}
\Psi(x,t;u_{\rm des}) := \dot{q}^T (\alpha_e J_h(q)^T  +  G(q) - B u_{\rm des}(x,t) ) + \alpha(h_D(q,\dot{q})) .  \nonumber
\end{gather}
\end{theorem}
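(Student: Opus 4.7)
The plan is to prove the theorem in three stages: compute $\dot{h}_D$ along the dynamics (where the crucial simplification comes from the skew-symmetry of $\dot{D}-2C$), verify $h_D$ is a valid CBF on $\S_D$, and then invoke Lemma \ref{lem:explicitCBFsol} to extract the closed-form solution.

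First I would differentiate $h_D(q,\dot{q}) = -\tfrac{1}{2}\dot{q}^T D(q)\dot{q} + \alpha_e h(q)$ along trajectories of \eqref{eqn:roboticsys}, yielding
\begin{equation*}
\dot{h}_D = -\tfrac{1}{2}\dot{q}^T \dot{D}(q)\dot{q} - \dot{q}^T D(q)\ddot{q} + \alpha_e J_h(q)\dot{q}.
\end{equation*}
Substituting $D(q)\ddot{q} = Bu - C(q,\dot{q})\dot{q} - G(q)$ from the equations of motion and invoking the standard skew-symmetry property $\dot{q}^T(\dot{D}(q) - 2C(q,\dot{q}))\dot{q} = 0$ of robotic systems with Christoffel-symbol-based $C$, the inertia and Coriolis contributions cancel exactly, leaving
\begin{equation*}
\dot{h}_D(q,\dot{q},u) = -\dot{q}^T B u + G(q)^T \dot{q} + \alpha_e J_h(q)\dot{q},
\end{equation*}
which matches the expression written into the constraint of \eqref{eqn:QProbotic}. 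Notably this expression is independent of $C(q,\dot{q})$, matching the paper's claim of reduced model dependence.

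Next I would verify the CBF condition on $\S_D$, i.e., that $\sup_{u}\dot{h}_D(q,\dot{q},u) \geq -\alpha(h_D(q,\dot{q}))$ holds pointwise on $\S_D$. Reading off $L_g h_D(q,\dot{q}) = -\dot{q}^T B$, I split into two cases. If $\dot{q} \neq 0$, invertibility of $B$ gives $B^T\dot{q}\neq 0$, so $L_g h_D \neq 0$ and $u$ can be chosen to make the left-hand side arbitrarily large. If $\dot{q} = 0$, then $\dot{h}_D = 0$ identically in $u$, but $h_D(q,0) = \alpha_e h(q) \geq 0$ on $\S_D$, so $-\alpha(h_D) \leq 0$ and the inequality holds trivially. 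Hence $h_D$ qualifies as a CBF on $\S_D$, and Theorem \ref{thm:cbf} yields forward invariance of $\S_D$ under any Lipschitz continuous controller satisfying the inequality, in particular under the minimizer of \eqref{eqn:QProbotic}.

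Finally, the closed-form expression \eqref{eqn:minnormexplicitrobotic} follows directly by applying Lemma \ref{lem:explicitCBFsol} with $h \leftarrow h_D$, $L_g h \leftarrow -\dot{q}^T B$, so that
\begin{equation*}
-\frac{L_g h_D^T}{L_g h_D\, L_g h_D^T} = \frac{B^T \dot{q}}{\|B^T \dot{q}\|^2},
\end{equation*}
and the switching function $\Psi$ in Lemma \ref{lem:explicitCBFsol} specializes to $\Psi(x,t;u_{\rm des}) = \dot{q}^T(\alpha_e J_h(q)^T + G(q) - B u_{\rm des}) + \alpha(h_D)$, reproducing the stated formula. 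The one subtlety I would flag is the $\dot{q}=0$ case, where $\|B^T\dot{q}\|^2=0$ would make \eqref{eqn:minnormexplicitrobotic} ill-defined; but in that case $\Psi = \alpha(h_D(q,0)) \geq 0$ on $\S_D$, so the controller uses the lower branch $u^* = u_{\rm des}$ and the singularity is never evaluated. This is the main technical obstacle to address explicitly, but it is resolved by the sign of $\Psi$ rather than requiring any new estimate.
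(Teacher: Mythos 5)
Your proposal is correct and follows essentially the same route as the paper's proof: differentiate $h_D$, cancel the Coriolis and inertia-rate terms via skew-symmetry of $\dot{D}-2C$, handle the $\dot{q}=0$ (i.e., $L_g h_D = 0$) case using $h_D(q,0)=\alpha_e h(q)\geq 0$ on $\S_D$, and invoke Lemma \ref{lem:explicitCBFsol} and Theorem \ref{thm:cbf} for the closed form and forward invariance. Your explicit remark that the sign of $\Psi$ resolves the apparent singularity in \eqref{eqn:minnormexplicitrobotic} is the same well-definedness argument the paper makes.
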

It is interesting to note that $h_D$ is a CBF on $\S_D$ without requiring that $h$ has relative degree 1, i.e., one need not require that $J_h(q) \neq 0$ (except on $\partial \S$) as in Lemma \ref{lem:velocity}. This reinforces the idea that these energy-based control barrier functions are a natural extension for relative-degree 2 robotic systems.



\begin{proof}[Proof of Theorem \ref{thm:fully_actuated}]
Differentiating $h_D$ along solutions yields (and suppressing the dependence on $q$ and $\dot{q}$):
\begin{eqnarray}
\label{eqn:hDcalc}
\dot{h}_D  & = &   -\dot{q}^T D  \ddot{q} -\frac{1}{2} \dot{q}^T \dot{D} \dot{q} + \alpha_e J_h \dot{q} \\
& = &  \dot{q}^T \left( C  \dot{q} + G  -  B u \right)  - \frac{1}{2} \dot{q}^T \dot{D} \dot{q} + \alpha_e J_h  \dot{q}  \nonumber\\
& = &  \frac{1}{2} \dot{q}^T \left( - \dot{D} + 2 C \right)\dot{q}  - \dot{q}^T  B u  + G^T \dot{q} + \alpha_e J_h \dot{q} \nonumber\\
& =  &  - \dot{q}^T  B u  + G^T \dot{q} + \alpha_e J_h  \dot{q}  \nonumber
\end{eqnarray}
where the last 
equality follows from the fact that $\dot{D} - 2 C$ is skew symmetric. To establish that $h_D$ is a CBF, we need only show that \eqref{eqn:QProbotic} has a solution since the inequality constraint in \eqref{eqn:QProbotic} implies that \eqref{eqn:cbf:definition} is satisfied in Definition \ref{def:cbf}.
As a result of Lemma \ref{lem:explicitCBFsol}, the solution to \eqref{eqn:QProbotic} is given by \eqref{eqn:minnormexplicit}; this follows by noting:
$$
L_f h_D(q,\dot{q})= (\alpha_e J_h(q) + G(q)^T ) \dot{q},
\quad
L_g h_D(q,\dot{q}) = -\dot{q}^T B.
$$
To show that \eqref{eqn:minnormexplicit} is well defined, we need to establish that:
\begin{gather}
L_g h_D(q,\dot{q}) = -\dot{q}^T B = 0  \quad \Rightarrow \quad L_f h_D (q,\dot{q}) + \alpha(h_D(q,\dot{q})) \geq 0. \nonumber
\end{gather}
Yet $\dot{q}^T B = 0$ implies that $\dot{q}^T= 0$ since $B$ is invertible and therefore
$L_f h_D(q, \dot{q}) = 0$ and since $(q,\dot{q}) \in \S_D$ it follows that $h_D(q,\dot{q}) \geq 0$ and hence $\alpha(h_D(q,\dot{q})) \geq 0$ implying that \eqref{eqn:minnormexplicit} is well defined and thus $h_D$ is a CBF.  Finally, the results Lipschitz continuity and forward invariance of $\S_D$ follow from the results of Lemma \ref{lem:explicitCBFsol} and Theorem \ref{thm:cbf}.
\end{proof}

Having established Theorem \ref{thm:fully_actuated}, the following corollary demonstrates how to further reduce model dependence.

\begin{corollary}
\label{rem:model-free}
Under the conditions of Theorem \ref{thm:fully_actuated}, if there exists a $c_u > 0$ such that $c_u \geq \frac{1}{2}\lambda_{\max}(D(q))$ then replacing the safety constraint \eqref{eqn:QProbotic} in the safety-critical QP with:
\begin{equation}\label{eqn:firstineqhdDfree}
  ~\underbrace{- \dot{q}^T B u + G(q)^T \dot{q} + \alpha_e J_h(q)\dot{q}  }_{\dot{h}_D(q,\dot{q},u)} \geq - \alpha(-
c_u\norm{\dot{q}}^2 + \alpha_e h(\q)),
\end{equation}
implies safety of $\S_D$.  Moreover, if in addition $\|G(\q)\|\leq c_u$, for $c_u>0$, then the constraint \eqref{eqn:QProbotic} can be replaced by:
\begin{equation}\label{eqn:secondineqhdDfree}
  \alpha_e J_h(q)\dot{q}  - \dot{q}^T B u - c_u |\dq|   \geq - \alpha\left(-c_u\norm{\dot{q}}^2 + \alpha_e h(\q)\right).
\end{equation}
wherein safety of $\S_D$ is guaranteed.
\end{corollary}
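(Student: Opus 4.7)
The plan is to reduce both parts of the corollary to the CBF safety constraint already established in Theorem \ref{thm:fully_actuated}. Specifically, I will define the surrogate function $\tilde h(q,\dot q) := -c_u\norm{\dot q}^2 + \alpha_e h(q)$ and establish the pointwise inequality $\tilde h \leq h_D$, so that the tighter new constraints in the corollary automatically imply the original CBF condition $\dot h_D(q,\dot q,u) \geq -\alpha(h_D(q,\dot q))$; forward invariance of $\S_D$ then follows from Theorem \ref{thm:fully_actuated}.

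For part (i), the key step is to invoke the hypothesized eigenvalue bound to obtain
\[
\tfrac{1}{2}\dot q^T D(q)\dot q \;\leq\; \tfrac{1}{2}\lambda_{\max}(D(q))\norm{\dot q}^2 \;\leq\; c_u\norm{\dot q}^2,
\]
which rearranges to $\tilde h(q,\dot q) \leq h_D(q,\dot q)$. Since $\alpha$ is an extended class $\classK$ function and thus monotonically nondecreasing, applying $-\alpha(\cdot)$ reverses the inequality: $-\alpha(\tilde h) \geq -\alpha(h_D)$. Consequently, any controller satisfying \eqref{eqn:firstineqhdDfree} also satisfies $\dot h_D \geq -\alpha(\tilde h) \geq -\alpha(h_D)$, and safety of $\S_D$ follows directly from Theorem \ref{thm:fully_actuated}.

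For part (ii), I would further exploit the gravity bound to eliminate $G(q)$ from the constraint. By Cauchy--Schwarz, $G(q)^T \dot q \geq -\norm{G(q)}\norm{\dot q} \geq -c_u\norm{\dot q}$, and substituting this into the expression \eqref{eqn:hDcalc} for $\dot h_D$ gives the pointwise lower bound
\[
\dot h_D(q,\dot q,u) \;\geq\; \alpha_e J_h(q)\dot q - \dot q^T B u - c_u\norm{\dot q}.
\]
Therefore, any controller satisfying \eqref{eqn:secondineqhdDfree} yields $\dot h_D \geq -\alpha(\tilde h) \geq -\alpha(h_D)$ by chaining this with the bound from part (i), and Theorem \ref{thm:fully_actuated} again gives safety of $\S_D$.

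The main obstacle I anticipate is feasibility of the resulting safety-critical QP, since the new constraints are strictly tighter than the one in Theorem \ref{thm:fully_actuated}. Fortunately, the argument from that theorem's proof carries over essentially verbatim: on the only problematic locus $\{\dot q^T B = 0\}$, invertibility of $B$ forces $\dot q = 0$, at which point both slack terms $c_u\norm{\dot q}^2$ and $c_u\norm{\dot q}$ vanish, the left-hand sides of \eqref{eqn:firstineqhdDfree} and \eqref{eqn:secondineqhdDfree} reduce to that of Theorem \ref{thm:fully_actuated}, and the right-hand sides are nonpositive because $\tilde h(q,0) = \alpha_e h(q) \geq 0$ on $\S_D$; hence the constraint is trivially satisfied by any $u$.
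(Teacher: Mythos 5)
Your proof is correct and follows essentially the same route as the paper: both establish $-c_u\norm{\dot q}^2 + \alpha_e h(q) \leq h_D(q,\dot q)$ from the eigenvalue bound, use monotonicity of $\alpha$ to conclude that the modified constraint implies the original one in \eqref{eqn:QProbotic}, and handle \eqref{eqn:secondineqhdDfree} by bounding $G(q)^T\dot q$ with the gravity bound, after which Theorem \ref{thm:fully_actuated} gives safety of $\S_D$. Your added remark on QP feasibility on the locus $\dot q^T B = 0$ is a harmless (and correct) supplement the paper leaves implicit.
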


\begin{proof}
It can be verified that $-\alpha(-c_u\norm{\dot{q}}^2 + \alpha_e h(\q)) \geq -\alpha(-\frac{1}{2}
\lambda_{\max}(D(q))\norm{\dot{q}}^2 + \alpha_e h(\q)) \geq -\alpha(h_D(q,\dot{q}))$, which means that \eqref{eqn:firstineqhdDfree}$\implies$\eqref{eqn:QProbotic}.  The second inequality, \eqref{eqn:secondineqhdDfree}, follows from the bound on the gravity vector $G$.
\end{proof}

\begin{figure*}[t]
    \centering
    \includegraphics[trim={.4cm .5cm .5cm .2cm},clip,width=\textwidth]{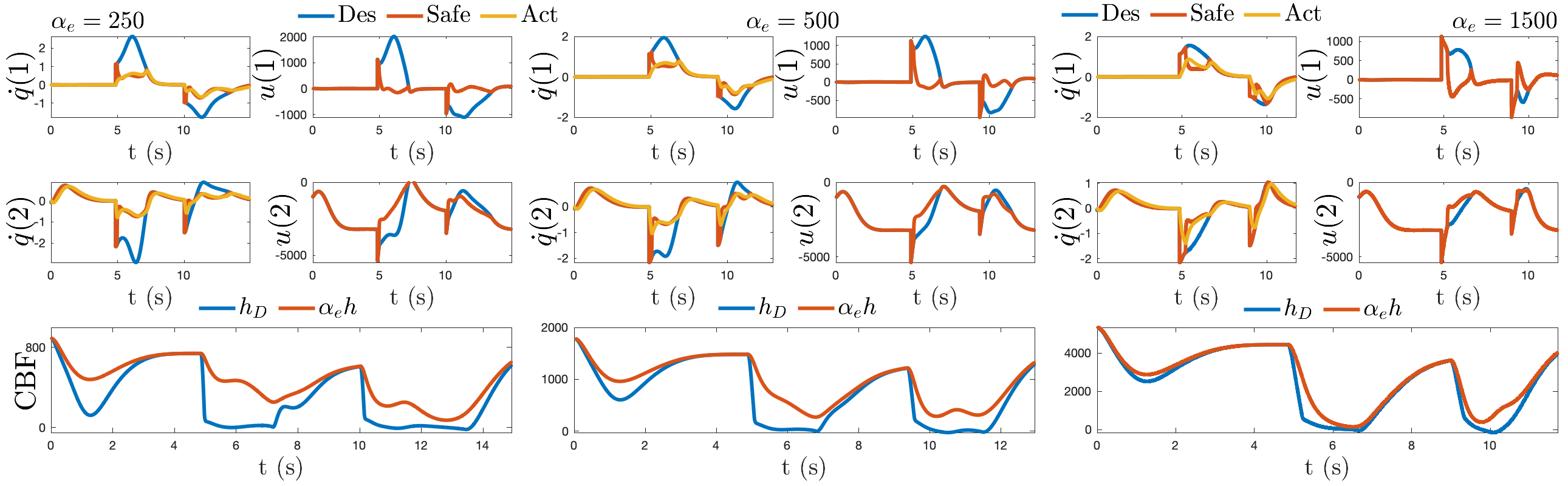}
    \caption{Energy-based kinematic CBF on the 6 DOF manipulator. Safety is guaranteed regardless of the choice of $\alpha_e$, but performance improves as $\alpha_e$ increases.}
    \label{fig:D_fig}
\end{figure*}

\ifdefined\revtwo
\else
\textcolor{red}{If the $D$ is not known, or is uncertain, the safety guarantees from Theorem \ref{thm:fully_actuated} hold under the alternative constraint
\begin{equation}
    ~\underbrace{\alpha_e J_h(q)\dot{q}  - \dot{q}^T B u + G(q)^T \dot{q}}_{\dot{h}_D(q,\dot{q},u)} \geq - \alpha(\tilde{h}(q,\dot{q})),
\end{equation}
with $\tilde{h}(q,\dot{q}) \leq h_D(q,\dot{q})\ \forall (q,\dot{q}) \in \mathcal{S}_D$. For example, if $\tilde{h} = -\frac{1}{2}
\lambda_{\min}(D(q))\norm{\dot{q}}^2 + \alpha_e h$, then $\tilde{h} \leq h_D$ and safety is guaranteed, since:
\begin{equation}
    ~\underbrace{\alpha_e J_h(q)\dot{q}  - \dot{q}^T B u + G(q)^T \dot{q}}_{\dot{h}_D(q,\dot{q},u)} \geq - \alpha(\tilde{h}(q,\dot{q})) \geq -\alpha(h_d(q,\dot{q})).
\end{equation}
This leads to the following remark for synthesizing a controller that is robust to model uncertainty without adding conservatism.}\fi

\ifdefined\revtwo
\else
\begin{remark}\textcolor{red}{
Consider the controller $u = B^{-1} G(q) +  \mu$, The QP in \eqref{eqn:QProbotic} becomes:
\begin{gather}
\label{eqn:QProbotic}
\begin{split}
u^*(q,\dot{q},t) =  ~ \underset{\mu \in \R^m}{\operatorname{argmin}}  ~ & ~ \| B^{-1} G(q) + \mu - u_{\rm des}(q,\dot{q},t) \|^2  \\
~  \mathrm{s.t.}   ~ & ~\underbrace{\alpha_e J_h(q)\dot{q}  - \dot{q}^T B \mu }_{\dot{h}_D(q,\dot{q},\mu)} \geq - \alpha\left(-\frac{1}{2}
\lambda_{\min}(D(q))\norm{\dot{q}}^2 + \alpha_e h(q)\right),
\end{split} \nonumber
\end{gather}
As $\alpha_e \rightarrow \infty$, we have that $\alpha_e h \rightarrow h_D$ and the QP becomes:
\begin{gather}
\label{eqn:QProbotic}
\begin{split}
u^*(q,\dot{q},t) =  ~ \underset{\mu \in \R^m}{\operatorname{argmin}}  ~ & ~ \| B^{-1} G(q) + \mu - u_{\rm des}(q,\dot{q},t) \|^2  \\
~  \mathrm{s.t.}   ~ & ~\underbrace{\alpha_e J_h(q)\dot{q}  - \dot{q}^T B \mu }_{\dot{h}_D(q,\dot{q},\mu)} \geq - \alpha\left(\alpha_e h(q)\right), \end{split} \nonumber
\end{gather}
This QP is especially robust because the inequality constraint does not depend on any model information, i.e., it is \emph{model independent}.  Note that this QP does not formally guarantee safety but practically does so as $\alpha_e \to \infty$. }
\end{remark}
\fi

\newsec{Connections with kinematic control.}  The goal is to now connect the previous constructions with the kinematic controllers defined in Section \ref{sec:kinematic}. Often, controllers can only be implemented as desired position and velocity commands that are passed to embedded level PD controllers. As such, we consider a controller of the form:
\begin{eqnarray}
\label{eqn:uDcont}
u = - K_{\rm vel}(\dot{q} - \dot{q}^*_d(q,\dot{q},t))
\end{eqnarray}
where $\dot{q}^*_d(q,t)$ is a desired velocity signal that enforces safety while trying to achieve tracking as in the case of Lemma \ref{lem:velocity} wherein we have a desired velocity based tracking controller: $\dot{q}_{\rm des}(q,t)  := J_y(q)^{\dagger} \left(\dot{x}_d(t) - \lambda (y(q) - x_d(t)) \right)$ for $\lambda > 0$.  The following is a result of the direct application of Theorem \ref{thm:fully_actuated} in the context of the controller \eqref{eqn:uDcont}.

\begin{theorem} \label{thm:fully_actuated-kinematiccontrol}
 Consider a robotic system \eqref{eqn:roboticsys}, and assume it is fully actuated.  Given a kinematic safety constraint $h : Q \to \R$ and the associated dynamically consistent extended CBF $h_D : Q \times \R \to \R$ as given in \eqref{eqn:dynCBF} with associated safe set $\S_D$, along with a desired trajectory $x_d(t)$ in the task space $x = y(q)$.  The D controller \eqref{eqn:uDcont} with $K_{\rm vel} \succ 0$ and the following QP:
\begin{gather}
\label{eqn:QProboticD}
\begin{split}
\dot{q}^*_d= ~
&  \underset{\dot{q}_d \in \R^n}{\operatorname{argmin}}  ~  ~ \| \dot{q}_d - \overbrace{J_y^{\dagger} \left(\dot{x}_d - \lambda (y - x_d) \right)}^{\dot{q}_{\rm des}(q,t)} \|^2  \\
&     \mathrm{s.t.}   ~  ~\underbrace{\alpha_e J_h \dot{q} +  \dot{q}^T B K_{\rm vel} \dot{q}  - \dot{q}^T B K_{\rm vel} \dot{q}_d + G^T \dot{q}}_{\dot{h}_D(q,\dot{q},\dot{q}_d)} \geq - \alpha(h_D),
\end{split}
\end{gather}
guarantees forward invariance, i.e., safety, of $\S_D$. Moreover, it has a closed form solution:
\begin{gather}
\begin{split}
\label{eqn:minnormexplicitrobotickvel}
  \dot{q}^*_d= \dot{q}_{\rm des} +
\left\{
\begin{array}{lcr}
\frac{K_{\rm vel}^T B^T \dot{q}}{\|K_{\rm vel}^T B^T  \dot{q} \|^2} \Psi(q,\dot{q},t;q_{\rm des})  & \mathrm{if~} \Psi(q,\dot{q},t;q_{\rm des}) < 0 \\
0 & \mathrm{if~} \Psi(q,\dot{q},t;q_{\rm des}) \geq 0
\end{array}
\right.
\end{split}
\end{gather}
where
\begin{gather}
\Psi(q,\dot{q},t;
\dot{q}_{\rm des}):= \dot{q}^T ( \alpha_e J_h^T   +
B K_{\rm vel} \dot{q} - B K_{\rm vel} \dot{q}_{\rm des}+  G ) + \alpha(h_D) .  \nonumber
\end{gather}
\end{theorem}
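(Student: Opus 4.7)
The plan is to recognize that Theorem \ref{thm:fully_actuated-kinematiccontrol} is essentially a direct corollary of Theorem \ref{thm:fully_actuated} and Lemma \ref{lem:explicitCBFsol}, obtained by viewing the desired velocity $\dot{q}_d$ (rather than the torque $u$) as the decision variable. The first step is to substitute the PD-style feedback law \eqref{eqn:uDcont}, $u = -K_{\rm vel}(\dot{q} - \dot{q}_d)$, into the expression for $\dot{h}_D$ already derived in \eqref{eqn:hDcalc}:
\begin{equation*}
\dot{h}_D = -\dot{q}^T B u + G^T \dot{q} + \alpha_e J_h \dot{q},
\end{equation*}
which exploited the skew-symmetry of $\dot{D} - 2C$. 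Plugging in the PD law yields
\begin{equation*}
\dot{h}_D(q,\dot{q},\dot{q}_d) = \alpha_e J_h \dot{q} + \dot{q}^T B K_{\rm vel} \dot{q} - \dot{q}^T B K_{\rm vel} \dot{q}_d + G^T \dot{q},
\end{equation*}
which is precisely the left-hand side of the constraint in \eqref{eqn:QProboticD} and is affine in the new decision variable $\dot{q}_d$.

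Next, I would verify that $h_D$ remains a valid CBF with $\dot{q}_d$ playing the role of the input. This reduces to checking feasibility of \eqref{eqn:QProboticD} on $\S_D$. The coefficient of $\dot{q}_d$ is $-K_{\rm vel}^T B^T \dot{q}$; when this vanishes, invertibility of $B$ and positive-definiteness of $K_{\rm vel}$ force $\dot{q} = 0$, in which case every term in $\dot{h}_D$ vanishes and the constraint collapses to $0 \geq -\alpha(h_D(q,0))$. Since $(q,\dot{q}) \in \S_D$ implies $h_D(q,\dot{q}) \geq 0$ and $\alpha$ is an extended class-$\classK$ function, this holds. This mirrors exactly the well-definedness argument already used in the proof of Theorem \ref{thm:fully_actuated}.

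With feasibility established, the closed-form \eqref{eqn:minnormexplicitrobotickvel} and forward invariance of $\S_D$ follow from Lemma \ref{lem:explicitCBFsol} applied with the substitutions $u \leftrightarrow \dot{q}_d$, $u_{\rm des} \leftrightarrow \dot{q}_{\rm des}$, $L_g h(x) \leftrightarrow -\dot{q}^T B K_{\rm vel}$, and $L_f h(x) \leftrightarrow \alpha_e J_h \dot{q} + \dot{q}^T B K_{\rm vel} \dot{q} + G^T \dot{q}$. The sign of the correction term in \eqref{eqn:minnormexplicitrobotickvel} (a plus rather than a minus) is accounted for by the sign of the effective $L_g h_D$, and the division by $\|K_{\rm vel}^T B^T \dot{q}\|^2$ is just the standard minimum-norm expression from Lemma \ref{lem:explicitCBFsol}.

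The only step that requires any care is the feasibility check at points where $\dot{q} = 0$, since this is where the affine constraint in $\dot{q}_d$ degenerates; but as noted above it is resolved by the same mechanism used in Theorem \ref{thm:fully_actuated}. Safety of $\S_D$ then follows from Theorem \ref{thm:cbf} applied to the closed-loop system obtained by composing the QP-based $\dot{q}_d^*$ with the PD feedback \eqref{eqn:uDcont}, and Lipschitz continuity follows from the closed-form expression \eqref{eqn:minnormexplicitrobotickvel} on $\S_D$ away from $\dot{q} = 0$, patched with the feasibility argument above.
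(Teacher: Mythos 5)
Your proposal is correct and follows exactly the route the paper intends: the paper omits this proof as a ``direct application of Theorem~\ref{thm:fully_actuated} in the context of the controller~\eqref{eqn:uDcont}'', and your argument---substituting the PD law into~\eqref{eqn:hDcalc} so the constraint is affine in $\dot{q}_d$, checking feasibility when $K_{\rm vel}^T B^T \dot{q}=0$ forces $\dot{q}=0$ (via invertibility of $B$ and $K_{\rm vel}\succ 0$), and then invoking Lemma~\ref{lem:explicitCBFsol} and Theorem~\ref{thm:cbf} with the sign of the effective $L_g h_D$ accounting for the plus sign in~\eqref{eqn:minnormexplicitrobotickvel}---is precisely that extension, carried out correctly and at the same level of rigor as the paper's proof of Theorem~\ref{thm:fully_actuated}.
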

Proof of Theorem \ref{thm:fully_actuated-kinematiccontrol} is omitted as it is a straightforward extension of Theorem \ref{thm:fully_actuated}.
\ifdefined\revtwo
\else
\begin{corollary}
\textcolor{red}{Consider a robotic system \eqref{eqn:roboticsys}, and assume it is fully actuated.  Given a kinematic safety constraint $h : Q \to \R$ and the associated dynamically consistent extended CBF $h_D : Q \times \R \to \R$ as given in \eqref{eqn:dynCBF} with associated safe set $\S_D$, along with a desired trajectory $x_d(t)$ in the task space $x = y(q)$.  The D controller \eqref{eqn:uDcont} with $K_{\rm vel} \succ 0$ and:
\begin{gather}
\label{eqn:QProboticD}
\begin{split}
\dot{q}^*_d= ~
&  \underset{\dot{q}_d \in \R^n}{\operatorname{argmin}}  ~  ~ \| \dot{q}_d - \overbrace{J_y^{\dagger} \left(\dot{x}_d - \lambda (y - x_d) \right)}^{\dot{q}_{\rm des}(q,t)} \|^2  \\
&     \mathrm{s.t.}   ~  ~\underbrace{\alpha_e J_h \dot{q} +  \dot{q}^T B K_{\rm vel} \dot{q}  - \dot{q}^T B K_{\rm vel} \dot{q}_d + G^T \dot{q}}_{\dot{h}_D(q,\dot{q},\dot{q}_d)} \geq - \alpha(h_D),
\end{split}
\end{gather}
guarantees forward invariance, i.e., safety, of $\S_D$.  Moreover, it has a closed form solution:
\begin{gather}
\begin{split}
\label{eqn:minnormexplicitrobotic}
  \dot{q}^*_d= \dot{q}_{\rm des} +
\left\{
\begin{array}{lcr}
\frac{K_{\rm vel}^T B^T \dot{q}}{\|K_{\rm vel}^T B^T  \dot{q} \|^2} \Psi(q,\dot{q},t;q_{\rm des})  & \mathrm{if~} \Psi(q,\dot{q},t;q_{\rm des}) < 0 \\
0 & \mathrm{if~} \Psi(q,\dot{q},t;q_{\rm des}) \geq 0
\end{array}
\right.
\end{split}
\end{gather}
where
\begin{gather}
\Psi(q,\dot{q},t;
\dot{q}_{\rm des}):= \dot{q}^T ( \alpha_e J_h^T  +  G +
B K_{\rm vel} \dot{q} - B K_{\rm vel} \dot{q}_{\rm des} ) + \alpha(h_D) .  \nonumber
\end{gather}}
\end{corollary}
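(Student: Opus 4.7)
The plan is to reduce the corollary to a direct application of Theorem \ref{thm:fully_actuated} by treating the PD-like mapping $u = -K_{\rm vel}(\dot{q}-\dot{q}_d)$ as a reparameterization of the torque input in which the \emph{desired velocity} $\dot{q}_d$ plays the role of the new decision variable in the safety QP. Because $K_{\rm vel} \succ 0$ and $B$ is invertible, this change of variables between $u \in \R^m$ and $\dot{q}_d \in \R^n$ is nondegenerate, so the set of admissible torques is unchanged and the CBF condition from Theorem \ref{thm:fully_actuated} lifts to an equivalent constraint on $\dot{q}_d$.

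First I would substitute $u = -K_{\rm vel}(\dot{q}-\dot{q}_d)$ into the expression for $\dot{h}_D$ derived in \eqref{eqn:hDcalc}, namely $\dot{h}_D = -\dot{q}^T B u + G^T \dot{q} + \alpha_e J_h \dot{q}$. Expanding gives
\begin{equation*}
\dot{h}_D = \alpha_e J_h \dot{q} + \dot{q}^T B K_{\rm vel} \dot{q} - \dot{q}^T B K_{\rm vel} \dot{q}_d + G^T \dot{q},
\end{equation*}
which is precisely the left-hand side of the constraint in the corollary's QP and is affine in $\dot{q}_d$. Hence the optimization problem is a standard QP with a quadratic cost in $\dot{q}_d$ and a single linear inequality constraint.

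Next I would invoke Lemma \ref{lem:explicitCBFsol} with the identifications $L_g h_D \leftrightarrow -\dot{q}^T B K_{\rm vel}$ (the coefficient of the new decision variable) and the remaining terms playing the role of $L_f h_D + \alpha(h_D)$ aggregated into the feasibility residual $\Psi$. The minimum-norm correction then takes the explicit form stated in \eqref{eqn:minnormexplicitrobotic}, with correction direction $K_{\rm vel}^T B^T \dot{q} / \|K_{\rm vel}^T B^T \dot{q}\|^2$ obtained by transposing the coefficient of $\dot{q}_d$. Lipschitz continuity of $\dot{q}^*_d$ on the feasible set follows as in Lemma \ref{lem:explicitCBFsol}, so the resulting closed-loop system has unique solutions.

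The main obstacle, and the only place where care is required, is verifying that $h_D$ remains a valid CBF under this PD reparameterization, i.e., that the QP is always feasible so the closed form above is well defined. Following the argument in the proof of Theorem \ref{thm:fully_actuated}, the potentially degenerate case is $K_{\rm vel}^T B^T \dot{q} = 0$. Since $K_{\rm vel} \succ 0$ and $B$ is invertible, this forces $\dot{q} = 0$; then every velocity-dependent term in $\dot{h}_D$ vanishes, $h_D(q,0) = \alpha_e h(q)$, and for $(q,\dot{q}) \in \S_D$ we have $h(q) \geq 0$, so $\alpha(h_D(q,0)) = \alpha(\alpha_e h(q)) \geq 0$ because $\alpha$ is an extended class $\classK$ function. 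The CBF inequality is therefore satisfied with $\dot{q}_d$ arbitrary, feasibility holds throughout $\S_D$, and forward invariance of $\S_D$ then follows from Theorem \ref{thm:cbf} applied to the closed-loop dynamics induced by \eqref{eqn:uDcont} with $\dot{q}^*_d$ given by \eqref{eqn:minnormexplicitrobotic}.
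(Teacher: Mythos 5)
Your proposal is correct and follows exactly the route the paper intends: the paper omits this proof as a ``straightforward extension of Theorem \ref{thm:fully_actuated},'' and you carry out precisely that extension by substituting $u=-K_{\rm vel}(\dot{q}-\dot{q}_d)$ into the $\dot{h}_D$ expression from \eqref{eqn:hDcalc}, applying Lemma \ref{lem:explicitCBFsol} with $L_g h_D = -\dot{q}^T B K_{\rm vel}$, and handling the degenerate case $K_{\rm vel}^T B^T\dot{q}=0\Rightarrow\dot{q}=0$ just as in the proof of Theorem \ref{thm:fully_actuated}. No gaps to report.
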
\fi
It may be the case, as with industrial actuators, that $K_{\rm vel}$ is not known.  In that case, it can typically be determined from experimental data.  Formally, one can guarantee safety by utilizing adaptive control barrier functions \cite{taylor_adaptive_barrier}.
Similar to Remark \ref{rem:model-free}, we can reformulate the constraints to eliminate the $D$ and $G$ matrices to yield robust QPs. 
\ifdefined\revtwo
\else
\alert{
\begin{remark}
In the case when one has the ability to achieve torque control on the robot, one could consider the following modification of \eqref{eqn:uDcont}:
$$
u = B^{-1} G(q) - K_{\rm vel}(\dot{q} - \dot{q}^*_d(q,\dot{q},t))
$$
This can be merged with the observations in Remark \ref{rem:modelfree} to obtain a QP of the form:
\begin{gather}
\label{eqn:QProboticDmodelfree}
\begin{split}
\dot{q}^*_d= ~
&  \underset{\dot{q}_d \in \R^n}{\operatorname{argmin}}  ~  ~ \| \dot{q}_d - \overbrace{J_y^{\dagger} \left(\dot{x}_d - \lambda (y - x_d) \right)}^{\dot{q}_{\rm des}(q,t)} \|^2  \\
&     \mathrm{s.t.}   ~  ~\underbrace{\alpha_e J_h \dot{q} +  \dot{q}^T B K_{\rm vel} \dot{q}  - \dot{q}^T B K_{\rm vel} \dot{q}_d }_{\dot{h}_D(q,\dot{q},\dot{q}_d)} \geq - \alpha(\alpha_e h),
\end{split}
\end{gather}
that enforces safety in the limit as $\alpha_e \to \infty$.  Important, this QP is not totally model free in that it does not directly depend on any model information and, as a result, if very robust to model uncertainty.  The only model information needed is in the feed forward gravity vector, $G(q)$, and this could be replaced by a feed-forward torque term determined experimentally, e.g., via learned impedance control parameters.
\end{remark}
}\fi

\begin{example}[\textbf{Energy-based kinematic CBF}]
The 6 DOF manipulator from Example \ref{ex:manipulator_kinematic_only} is now filtered with the constraint given in \eqref{eqn:firstineqhdDfree}, using $c_u = 5 \lambda_{\max} (D)$. Figure \ref{fig:D_fig} shows the result with a variety of different $\alpha_e$ values. Safety is guaranteed regardless of the value of $\alpha_e$, but as the value increases, the manipulator is able to move faster and get closer to obstacles, resulting in a higher performing system with the same safety guarantees.



\end{example}

\ifdefined\revtwo
\else
\newsec{Application to underatuated systems.}  The methods developed can also be applied to underactuated systems, i.e., where $m \leq k$ and we have a potentially non-singular actuation matrix $B$. .  In this context, we take inspiration from the classical operational space formulation \cite{khatib1987unified}.  Specifically, by applying the ``pre-feedback'' controller $u = B^T J_h^T u_h$, for $u_h \in \R$ the \emph{safety input}.  This yields:

\begin{corollary}
Consider a robotic system \eqref{eqn:roboticsys} and a kinematic safety constraint: $h : Q \to \R$ together with the dynamically extended CBF $h_D$ and associated safe sets $\S$ and $\S_D$, respectively.  If for all $(q,\dot{q}) \in \S_D$:
\begin{align}
\label{eqn:CBFundercond}
\dot{q}^T B B^T J_h(q)^T  = 0 ~  \Rightarrow  ~
(G^T  + \alpha_e J_h )  \dot{q} \geq - \alpha(h_D(q,\dot{q}))
\end{align}
then the following controller:
\begin{gather}
\label{eqn:QProboticunder}
 u(q,\dot{q},t) =   B^T J_h(q)^T u_h^* (q,\dot{q},t) \hspace{0.6\columnwidth} \\
\begin{split}
u^*(q,\dot{q},t) = & ~ \underset{u_h \in \R}{\operatorname{argmin}}  ~  ~ \| B^T J_h^T u_h - u_{\rm des}(q,\dot{q},t) \|^2  \\
& \quad   \mathrm{s.t.}   ~  ~\underbrace{\alpha_e J_h(q)\dot{q}  - \dot{q}^T B B^T J_h(q)^T u_h  + G(q)^T \dot{q}}_{\dot{h}_D(q,\dot{q},u_h)} \geq - \alpha(h_D(q,\dot{q})),
\end{split} \nonumber
\end{gather}
guarantees forward invariance of $\S_D$, i.e., safety of $\S_D$, and has a closed form solution.
\end{corollary}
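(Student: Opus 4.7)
The plan is to mimic the proof of Theorem \ref{thm:fully_actuated} while handling the fact that the effective input is now the scalar $u_h \in \R$ rather than $u \in \R^m$, and that $\dot{q}^T B B^T J_h(q)^T$ can vanish at points where $\dot{q} \neq 0$ (which did not happen in the fully actuated case, since there $B$ was invertible). The structural skeleton is: (i) compute $\dot{h}_D$ along solutions of \eqref{eqn:roboticsys}, (ii) substitute the pre-feedback parameterization $u = B^T J_h(q)^T u_h$, (iii) verify that $h_D$ is a CBF on $\S_D$ with respect to the scalar input $u_h$, (iv) invoke Lemma \ref{lem:explicitCBFsol} and Theorem \ref{thm:cbf} to conclude the existence of the closed-form solution and forward invariance of $\S_D$.

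For step (i), the calculation from \eqref{eqn:hDcalc} is unchanged since it only used the equations of motion and the skew symmetry of $\dot D - 2C$; thus $\dot{h}_D = -\dot{q}^T B u + G(q)^T \dot{q} + \alpha_e J_h(q) \dot{q}$, which involves neither $C(q,\dot{q})$ nor $D(q)$. Substituting the pre-feedback law in step (ii) yields precisely the constraint appearing in \eqref{eqn:QProboticunder}, with Lie derivatives
\begin{align*}
L_f h_D(q,\dot{q}) &= \bigl(\alpha_e J_h(q) + G(q)^T\bigr)\dot{q}, \\
L_g h_D(q,\dot{q}) &= -\dot{q}^T B B^T J_h(q)^T,
\end{align*}
where now $L_g h_D$ is a scalar coefficient of $u_h \in \R$.

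For step (iii), by Definition \ref{def:cbf} I need to exhibit, for every $(q,\dot{q}) \in \S_D$, some $u_h \in \R$ such that $\dot{h}_D(q,\dot{q},u_h) \geq -\alpha(h_D(q,\dot{q}))$. I would split on whether the scalar $L_g h_D(q,\dot{q})$ is zero. When $L_g h_D(q,\dot{q}) \neq 0$, any sufficiently large $|u_h|$ of the appropriate sign satisfies the inequality, so the QP in \eqref{eqn:QProboticunder} is feasible. When $L_g h_D(q,\dot{q}) = 0$, i.e., $\dot{q}^T B B^T J_h(q)^T = 0$, the value of $u_h$ does not affect $\dot{h}_D$, and feasibility reduces exactly to the standing assumption \eqref{eqn:CBFundercond}. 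This is the step that corresponds to the argument in Theorem \ref{thm:fully_actuated} where invertibility of $B$ forced $\dot{q} = 0$ whenever $L_g h_D = 0$; in the underactuated setting this automatic cancellation fails, and \eqref{eqn:CBFundercond} is precisely what is needed to patch it. This is the main obstacle I expect, and it is essentially already handled by the hypothesis.

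Once $h_D$ is known to be a CBF on $\S_D$ with respect to the scalar input $u_h$, step (iv) is routine: the QP in \eqref{eqn:QProboticunder} fits exactly the setting of Lemma \ref{lem:explicitCBFsol} (with $L_g h_D$ replaced by the scalar $-\dot{q}^T B B^T J_h^T$ and $u_{\rm des}$ viewed in the same space), so the closed-form expression follows immediately, and Lipschitz continuity of the resulting feedback together with Theorem \ref{thm:cbf} yields forward invariance of $\S_D$. The final step is to translate back through the pre-feedback $u = B^T J_h^T u_h^*$ to express the actuator command, which gives the stated controller and completes the proof.
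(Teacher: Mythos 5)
Your proposal is correct and follows essentially the same route as the paper: both reuse the calculation in \eqref{eqn:hDcalc} (which needs neither $C$ nor $D$ thanks to skew symmetry), substitute the pre-feedback $u = B^T J_h^T u_h$ to identify $L_f h_D = (G^T + \alpha_e J_h)\dot{q}$ and the scalar $L_g h_D = -\dot{q}^T B B^T J_h^T$, and observe that hypothesis \eqref{eqn:CBFundercond} supplies feasibility exactly when $L_g h_D = 0$, after which Lemma \ref{lem:explicitCBFsol} and Theorem \ref{thm:cbf} give the closed form and forward invariance. Your write-up is simply a more explicit version of the paper's terse argument, correctly flagging that \eqref{eqn:CBFundercond} replaces the invertibility-of-$B$ step from Theorem \ref{thm:fully_actuated}.
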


\begin{proof}
Following from the calculation in \eqref{eqn:hDcalc}:
$$
\dot{h}_D = \underbrace{- \dot{q}^T  B B^T J_h(q)^T}_{L_g h_D(q, \dot{q})} u_h  + \underbrace{(G^T  + \alpha_e J_h)  \dot{q}}_{L_f h_D(q,\dot{q})} ,
$$
wherein the condition in \eqref{eqn:CBFundercond} implies that $h_D$ is a CBF.
\end{proof}


\alert{Alternative formulation---merging the two extended CBF formulations.}

\newsec{Application to underatuated systems.}
We now consider the application of this methodology to underactuated systems, i.e., where $m \leq k$ and we have a potentially non-singular actuation matrix $B$.  In this setting, we look to merge the methods presented in Section \ref{sec:secondorder}, and specifically Lemma \ref{lem:extendedcbf}, with the dynamically consistent extended CBF.  In this context, consider:
\begin{eqnarray}
\label{eqn:heunder}
h_{u}(q, \dot{q}) := -(\dot{h}(q,\dot{q}))^2 + \alpha_e h(q)
\end{eqnarray}
As in the case of $h_D$, we have that for $\S_u = \{(q,\dot{q}) \in Q \times \R ~ : ~ h_u(q,\dot{q}) \geq 0$ that $\S_u \subset \S$ where $\lim_{\alpha_e \to \infty} \S_u \to \S$.

\subsection{Safety-Critical Control of Underactuated Robotic Systems}

We now consider the application of this methodology to underactuated systems, i.e., where $m \leq k$ and we have a potentially non-singular actuation matrix $B$.  In this setting, we look to generalize the methods presented in Section \ref{sec:secondorder}, and specifically Lemma \ref{lem:extendedcbf}.  

\newsec{Safety-critical operation space.}  Taking inspiration from the classical operational space formulation \cite{khatib1987unified} of converting the system to the ``task dynamics.''  Specifically, by applying the ``pre-feedback'' controller $u = B^T J_h^T u_h$, for $u_h \in \R$ the \emph{safety input}.  Premultiplying by the transpose of the \emph{dynamically consistent} pseudoinverse:
$$
\widehat{J}_h  =  \frac{D^{-1} J_h^T}{J_h D^{-1} J_h^T} \quad
\Rightarrow  \quad
J_h \widehat{J}_h  = I
$$
yields the \underline{CBF dynamics}:
\begin{gather}
\underbrace{\frac{1}{ J_h D^{-1} J_h^T }}_{:= D_h(q)} \ddot{h} +
\underbrace{\widehat{J}_h^T \left( C \dot{q} + G - D \dot{J}_h \dot{q} \right)}_{:= H_h(q,\dot{q})} =
\underbrace{\widehat{J}_h^T B B^T J_h^T}_{:= B_h(q)} u_h
\end{gather}
which describes the evolution of the control barrier function, $h$, as a mechanical system. This motivates the following result.

\begin{theorem}
Consider a robotic system \eqref{eqn:roboticsys} and a kinematic safety constraint: $h : Q \to \R$.  Consider the extended CBF for underactuated systems, for $\alpha_e > 0$:
\begin{eqnarray}
\label{eqn:heunderactated}
h_{e}(q, \dot{q}) := \dot{h}(q,\dot{q}) + \alpha_e h(q)
\end{eqnarray}
with safe set: $\S_e := \{ (q, \dot{q} \in Q \times \R ~ : ~ h_e(q,\dot{q}) \geq 0\}$.  Then if $B_h(q) \neq 0$, the following controller:
\begin{gather}
\label{eqn:QProbotic}
\begin{split}
 u =  & B^T J_h^T u_h^* \\
u^* = & ~ \underset{u_h \in \R}{\operatorname{argmin}}  ~  ~ \| B^T J_h^T u_h - u_{\rm des} \|^2  \\
& \quad   \mathrm{s.t.}   ~  ~\underbrace{J_h(q)\dot{q}  - \dot{q}^T B u + G(q) \widehat{J}_h \dot{q}}_{\dot{h}_e(q,\dot{q},u_h)} \geq - \alpha(h_e),
\end{split}
\end{gather}
guarantees forward invariance of $\S_e \cap \S$, i.e., safety of $\S_e \cap \S$, stability of $\S_e$, and has a closed form solution.

\end{theorem}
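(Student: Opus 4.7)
The plan is to verify that the scalar extended CBF $h_e$ is a valid control barrier function for the system after applying the pre-feedback $u = B^T J_h(q)^T u_h$, thereby reducing the underactuated problem to a scalar CBF-QP to which Lemma \ref{lem:explicitCBFsol} and Theorem \ref{thm:cbf} apply. First I would differentiate $h_e(q,\dot q) = J_h(q) \dot q + \alpha_e h(q)$ along trajectories of \eqref{eqn:roboticsys}, so that $\dot{h}_e = \ddot h + \alpha_e \dot h$. Using the CBF dynamics $D_h(q) \ddot h + H_h(q,\dot q) = B_h(q) u_h$ derived just above the theorem by premultiplying by the transpose of the dynamically consistent pseudoinverse, this gives $\dot h_e = D_h^{-1}(B_h u_h - H_h) + \alpha_e J_h \dot q$, an expression affine in $u_h$ with $L_g h_e = D_h^{-1} B_h$ and $L_f h_e = -D_h^{-1} H_h + \alpha_e J_h \dot q$.

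Next, the hypothesis $B_h(q) \neq 0$ guarantees $L_g h_e \neq 0$ on $\S_e$, so for any desired right-hand side $-\alpha(h_e(q,\dot q))$ there exists $u_h \in \R$ satisfying the CBF inequality; hence $h_e$ meets Definition \ref{def:cbf} on $\S_e$. An analogous KKT argument to that used in Lemma \ref{lem:explicitCBFsol} then produces the closed-form solution: the cost $\|B^T J_h^T u_h - u_{\rm des}\|^2$ is a convex quadratic in the scalar decision variable $u_h$, and the stationarity, primal-dual feasibility, and complementary slackness conditions lead to a formula parallel to \eqref{eqn:minnormexplicit}. Forward invariance of $\S_e$ is then immediate from Theorem \ref{thm:cbf}.

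To promote invariance of $\S_e$ to invariance of $\S_e \cap \S$, I would invoke the comparison principle: along trajectories in $\S_e$ one has $h_e \geq 0$, i.e., $\dot h(q(t)) \geq -\alpha_e h(q(t))$, so $h(q(t)) \geq h(q(0)) \exp(-\alpha_e t)$, and any initial condition with $h(q(0)) \geq 0$ yields $h(q(t)) \geq 0$ for all $t \geq 0$. Consequently $\S_e \cap \S$ is forward invariant. Asymptotic stability of $\S_e$ then follows from the second clause of Theorem \ref{thm:cbf}, provided $h_e$ is in fact a CBF on an open neighborhood of $\S_e$, which holds whenever $B_h$ remains nonzero outside $\S_e$ as well.

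The main obstacle will be reconciling the abstract form of $\dot h_e$ derived above with the specific inequality constraint as written in the theorem, namely $J_h(q)\dot q - \dot q^T B u + G(q) \widehat{J}_h \dot q \geq -\alpha(h_e)$. Carrying out this reduction requires expanding $\ddot h = \dot J_h \dot q + J_h D^{-1}(Bu - C\dot q - G)$ in the ambient coordinates and then using the identity $J_h \widehat{J}_h = I$ together with the skew-symmetry of $\dot D - 2C$ (exactly as in \eqref{eqn:hDcalc}) to eliminate the Coriolis and $\dot J_h$ contributions; verifying that what remains coincides term-by-term with the stated constraint, and that the cancellation is valid uniformly on $\S_e$, is the only nontrivial bookkeeping in the argument.
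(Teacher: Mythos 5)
Your high-level plan---derive the task-space ``CBF dynamics'' $D_h\ddot h + H_h = B_h u_h$, use $B_h(q)\neq 0$ to conclude $L_{g}h_e\neq 0$ so that $h_e$ is a valid scalar CBF, obtain the closed form by the KKT argument of Lemma \ref{lem:explicitCBFsol}, and promote invariance of $\S_e$ to invariance of $\S_e\cap\S$ via the comparison principle $h(q(t))\geq e^{-\alpha_e t}h(q(0))$---is reasonable, and its first half is exactly what the paper's own proof does: the paper only derives the CBF dynamics (via the identity $\widehat{J}_h^T D = D_h J_h$) and stops, so your invariance, closed-form, and stability steps are additions rather than departures.

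The genuine gap is the step you defer as ``the only nontrivial bookkeeping'': it cannot be carried out. Differentiating \eqref{eqn:heunderactated} gives $\dot h_e = \ddot h + \alpha_e\dot h = \dot J_h\dot q + J_hD^{-1}\bigl(BB^TJ_h^Tu_h - C\dot q - G\bigr) + \alpha_e J_h\dot q$, so the input enters through the coefficient $J_hD^{-1}BB^TJ_h^T$ (equivalently $D_h^{-1}B_h$), whereas in the QP constraint of the statement it enters through $-\dot q^TBB^TJ_h^T$; these differ in general (one is independent of $\dot q$, the other linear in it), and the constraint also lacks the factor $\alpha_e$ on $J_h\dot q$. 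Moreover, the skew-symmetry of $\dot D - 2C$ that you invoke by analogy with \eqref{eqn:hDcalc} is unavailable here: that cancellation is generated by differentiating the kinetic-energy term $-\tfrac12\dot q^TD\dot q$ in $h_D$, and $h_e$ contains no such term, so the Coriolis and $\dot J_h$ contributions do not drop out. Consequently, feasible points of the stated constraint need not satisfy $\dot h_e\geq -\alpha(h_e)$, Theorem \ref{thm:cbf} cannot be applied to conclude invariance of $\S_e$, and the downstream claims (invariance of $\S_e\cap\S$, stability) inherit the gap. To close it you must enforce the correct expression coming from the CBF dynamics, e.g. $D_h^{-1}B_h u_h - D_h^{-1}H_h + \alpha_e J_h\dot q \geq -\alpha(h_e)$; with that constraint in place, your remaining steps (feasibility from $B_h\neq 0$, the scalar KKT closed form for the cost $\|B^TJ_h^Tu_h - u_{\rm des}\|^2$, and the comparison argument, with the stability caveat you already note) do go through.
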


\begin{proof}
We begin by deriving the CBF dynamics and, thereby, verifying that the barrier function, $h$, evolves according to these dynamics (here we follow the notation from \cite{ames2013towards}).  Beginning with the robot dynamics \eqref{eqn:roboticsys}, premultiplying by $\widehat{J}_h^T$ and applying the controller $u = B^T J_h^T u_h$ yields:
$$
\widehat{J}_h^T D \ddot{q} + \widehat{J}_h^T C(q,\dot{q}) \dot{q})
+ \widehat{J}_h^T G(q) = \widehat{J}_h^T B B^T J_h^T u_h
$$
Now $\dot{h} = J_h(q) \dot{q}$ and thus $\ddot{h} = J_h(q) \ddot{q} + \dot{J}_h(q,\dot{q}) \dot{q}$.  Premultiplying by $D_h$ as defined in \eqref{eqn:dynCBF} yields:
\begin{eqnarray}
D_h \ddot{h} & = &  D_h J_h \ddot{q} + D_h \dot{J}_h \dot{q} \nonumber\\
& = & \widehat{J}_h^T D \ddot{q} + D_h J_h \dot{q}  \nonumber\\
& = & \underbrace{- \widehat{J}_h^T H + D_h J_h \dot{q} }_{H_h} + \underbrace{\widehat{J}_h^T B B^T J_h^T}_{B_h} u_h \nonumber
\end{eqnarray}
where here we used the fact that:
$$
\widehat{J}_h^T D =
\frac{J_h D^{-1}}{J_h D^{-1} J_h^T}  D
= D_h J_h
$$
and therefore the end result are the CBF dynamics in \eqref{eqn:dynCBF}.
\end{proof}

\begin{theorem}
Consider a robotic system \eqref{eqn:roboticsys} and a kinematic safety constraint: $h : Q \to \R$.  Consider the dynamically consistent extended CBF for underactuated systems:
\begin{eqnarray}
\label{eqn:dynCBFunder}
\widehat{h}_{D}(q, \dot{q}) := - \frac{1}{2} \dot{h}(q,\dot{q})^T D_h(q) \dot{h}(q,\dot{q}) + h(q)
\end{eqnarray}
with safe set: $\widehat{S}_D := \{ (q, \dot{q} \in Q \times \R ~ : ~ \widehat{h}_D(q,\dot{q}) \geq 0\}$.  Then $\widehat{S}_D \subset \S$ and for all $(q,\dot{q}) \in \widehat{S}_D$ the controller:
\begin{gather}
\label{eqn:QProbotic}
\begin{split}
u^*(q,\dot{q},t) =  ~ \underset{u_h \in \R}{\operatorname{argmin}}  ~ & ~ \| B^T J_h^T u_h - u_{\rm des}(q,\dot{q},t) \|^2  \\
~  \mathrm{s.t.}   ~ & ~\underbrace{J_h(q)\dot{q}  - \dot{q}^T B u + G(q) \widehat{J}_h \dot{q}}_{\dot{h}_D(q,\dot{q},u)} \geq - \alpha(h_D(q,\dot{q})),
\end{split}
\end{gather}
guarantees forward invariance of $\widehat{S}_D$, i.e., safety of $\widehat{S}_D$, and has a closed form solution:

\end{theorem}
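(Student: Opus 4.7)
The plan is to mirror the proof of Theorem~\ref{thm:fully_actuated}, but work with the one-dimensional CBF dynamics $D_h(q)\ddot h + H_h(q,\dot q) = B_h(q) u_h$ derived in the preceding theorem rather than with the full manipulator equations. First I would establish the inclusion $\widehat S_D \subset \S$: the scalar $D_h(q) = 1/(J_h D^{-1} J_h^T)$ is strictly positive wherever $J_h\neq 0$, by positive-definiteness of $D^{-1}$, so $\tfrac12 D_h \dot h^2 \ge 0$ and $\widehat h_D(q,\dot q) \ge 0$ forces $h(q) \ge \tfrac12 D_h \dot h^2 \ge 0$. This is the analogue of part (i) of the proposition preceding Theorem~\ref{thm:fully_actuated}.

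Next I would differentiate $\widehat h_D$ along trajectories of the pre-feedback system $u = B^T J_h^T u_h$. The key algebraic step is to substitute the CBF dynamics into $\dot{\widehat h}_D = -D_h \dot h\, \ddot h - \tfrac12 \dot D_h \dot h^2 + \dot h$, decompose $H_h = C_h(q,\dot q)\dot h + G_h(q)$ where the scalar $C_h$ inherits skew-symmetry from the original $\dot D - 2C$ (so $\dot D_h - 2C_h = 0$ in the one-dimensional reduction), and cancel the quadratic-in-$\dot h$ terms. This yields $\dot{\widehat h}_D = -\dot h\, B_h u_h + (G_h+1)\dot h$, which, after expressing $\dot h\, B_h u_h$ as $\dot q^T B u$ via the operational-space identity $\widehat J_h^T D = D_h J_h$, matches the constraint displayed in the statement. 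From this I read off $L_g \widehat h_D = -\dot h\, B_h$ and an $L_f \widehat h_D$ whose only dynamic model dependence is through $G$ and first-order kinematic quantities.

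Finally, I would verify that $\widehat h_D$ is a valid CBF on $\widehat S_D$ by handling the degenerate direction $L_g \widehat h_D = 0$: under the non-singularity assumption $B_h \neq 0$, this forces $\dot h = 0$, whence $L_f \widehat h_D = 0$ and $\widehat h_D = h \ge 0$ on $\widehat S_D$ gives $L_f \widehat h_D + \alpha(\widehat h_D) = \alpha(h) \ge 0$, so the CBF inequality is satisfiable. Forward invariance of $\widehat S_D$ and the claimed closed-form min-norm solution then follow by direct invocation of Lemma~\ref{lem:explicitCBFsol} and Theorem~\ref{thm:cbf}, with the substitutions $L_g \widehat h_D = -\dot h\, B_h$ and $L_f \widehat h_D$ as above.

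I expect the main obstacle to be the cancellation step: rigorously showing that the reduced (task-space) dynamics inherits the skew-symmetric Coriolis structure of the original Lagrangian system, so that the $\tfrac12 \dot D_h \dot h^2$ term exactly kills the corresponding piece of $\dot h\, H_h$. This can be done either by direct computation using $\widehat J_h = D^{-1} J_h^T/D_h$ and the skew-symmetry of $\dot D - 2C$, or by invoking the operational-space Lagrangian reduction of~\cite{khatib1987unified}. One also has to be careful at configurations where $J_h$ loses rank, since $D_h$ and $\widehat J_h$ become singular there; the cleanest route is to restrict attention to the regular set $\{q : J_h(q)\neq 0\}$, which is guaranteed near $\partial\S$ by the usual CBF non-degeneracy assumption.
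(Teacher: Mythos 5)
There is a genuine gap, and it sits exactly where you predicted: the cancellation step. Differentiating $\widehat{h}_D = -\tfrac12 D_h \dot h^2 + h$ gives $\dot{\widehat{h}}_D = -\tfrac12 \dot D_h \dot h^2 - \dot h\, D_h \ddot h + \dot h$, and substituting the reduced $h$-dynamics leaves a velocity-dependent term that is a function of the \emph{full} joint velocity, not of $\dot h$ alone. In the operational-space form this term is $\bigl(\widehat{J}_h^T C - D_h \dot J_h\bigr)\dot q$; the quantity that satisfies the one-dimensional skew-symmetry identity $\dot D_h - 2\mu = 0$ is $\mu = \bigl(\widehat{J}_h^T C - D_h \dot J_h\bigr)\widehat{J}_h$, and $\mu\,\dot h = \bigl(\widehat{J}_h^T C - D_h \dot J_h\bigr)\widehat{J}_h J_h \dot q$ differs from the actual term by the contribution of the velocity component in the null space of $J_h$ (equivalently, of the unactuated/$w$-coordinates). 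Since the task is scalar and $k>1$, that component is generically nonzero, so $\tfrac12 \dot D_h \dot h^2$ does \emph{not} kill the Coriolis piece, your simplified expression $\dot{\widehat{h}}_D = -\dot h B_h u_h + (G_h+1)\dot h$ is not valid, and enforcing the displayed constraint therefore does not imply $\dot{\widehat{h}}_D \geq -\alpha(\widehat{h}_D)$; safety does not follow from your argument.

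The paper's own proof never attempts this cancellation. In its underactuated construction the terms $D_h, C_h, G_h, B_h$ come from the coordinate change $\Phi(q)=(w(q),h(q))$ and Schur-complement elimination of $\ddot w$ (not from the dynamically consistent pseudoinverse), and the QP constraint is taken to be the \emph{full} derivative $-\tfrac12 \dot h\, \dot D_h \dot h + \dot h\, C_h \dot q + \dot h\, G_h + \alpha_e \dot h - \dot h\, B_h u \geq -\alpha(\widehat{h}_D)$, i.e. all velocity-dependent model terms are retained (or, in the accompanying remark, replaced by conservative bounds $c_l, c_u$ rather than cancelled). Feasibility then follows from the observation that every term carries a factor of $\dot h$, so when $L_g \widehat{h}_D = -\dot h\, B_h = 0$ the left-hand side vanishes while $-\alpha(\widehat{h}_D) \leq 0$ on $\widehat{S}_D$; your degenerate-direction argument and your proof of $\widehat{S}_D \subset \S$ are correct and match this part of the paper. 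To salvage your route you would have to either bound the residual term $\dot h\bigl(C_h\dot q\bigr) - \tfrac12 \dot D_h\dot h^2$ (as the paper's remark does) or keep it explicitly in the constraint; it cannot be removed by skew-symmetry. A minor additional point: Lemma \ref{lem:explicitCBFsol} does not apply verbatim to the pre-feedback QP, since the decision variable is the scalar $u_h$ and the cost is $\| B^T J_h^T u_h - u_{\rm des}\|^2$ rather than $\|u - u_{\rm des}\|^2$; the scalar QP is still explicitly solvable, but the closed form must be rederived.
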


\fi

\newtheorem{property}{\bfseries Property}

\section{Underactuated systems}
\label{sec:underactuated}
    The methods developed can also be applied to underactuated systems, i.e., where $m \leq k$ and we have a potentially non-singular actuation matrix $B$. The key idea is to treat $h(\q)$ as one of the coordinates. Therefore, assume a mapping $\Phi(\q) := (w(\q),h(\q))$, where $w$ is chosen such that $\Phi$ is a diffeomorphism. We obtain the derivative as
\begin{align}
\begin{bmatrix}
\dot w(\q,\dq) \\ \dot h(\q,\dq) 
\end{bmatrix} = J_e(\q) \dq,
\end{align}
where $J_e(\q)$ is the Jacobian matrix. $J_e$ is non-singular by property of diffeomorphism.
We re-write the equations of motion of the robot as
\begin{align}\label{eq:newdynamics}
D_e(\q) \begin{bmatrix}
\ddot w \\ \ddot h
\end{bmatrix} + C_e(\q , \dot q) \begin{bmatrix}
\dot w \\ \dot h
\end{bmatrix} + G_e(\q)  =  J_e(\q)^{-T} B u, 
\end{align}
where 
\begin{align}
D_e(\q)  &= J_e(\q)^{-T} D(\q) J_e(\q)^{-1} \nonumber \\
C_e(\q,\dq) &= 	J_e(\q)^{-T}  C(\q) J_e(\q)^{-1} + J_e(\q)^{-T} D(\q) \dot {J}_e(\q)^{-1}  \nonumber \\
G_e(\q) &= J_e(\q)^{-T} G(\q),
\end{align}
are the new terms that define the dynamics in the transformed space. It can be verified that the properties of $D_e, C_e$ will be same as that of $D, C$, i.e., $D_e$ is symmetric positive definite, and $\dot D_e - 2 C_e$ is skew-symmetric. The interested reader may see \cite[Chapter 4, Section 5.4]{murray1994mathematical}
for details on the coordinate change and the corresponding
properties. We separate the robotic dynamics \eqref{eq:newdynamics} into two parts:
\begin{align}\label{eq:underactuatedeom}
D_{11}(\q) \ddot w + D_{12}(\q) \ddot h + C_1(\q,\dq) \dq + G_1(\q) &= B_1(\q) u \nonumber \\
D_{21}(\q) \ddot w + D_{22}(\q) \ddot h + C_2(\q,\dq) \dq + G_2(\q) &= B_2(\q) u,
\end{align}
where the terms corresponding to $D,C,G,B$ are apparent from the setup. $\ddot w$ can be eliminated from \eqref{eq:underactuatedeom} to obtain 
\begin{align}\label{eq:eliminateddynamicsunactuated}
 &\underbrace{(D_{22}  - D_{21} D_{11}^{-1} D_{12} )}_{D_h}  \ddot h +   \underbrace{(C_2 - D_{21} D_{11}^{-1} C_1 )}_{C_h} \dq  \\
 &\hspace{20mm} + \underbrace{G_2 - D_{21} D_{11}^{-1} G_1}_{G_h}  = \underbrace{(B_2 - D_{21} D_{11}^{-1} B_1)}_{B_h} u, \nonumber
\end{align}
where $D_h$ is nothing but the Schur complement form, and it is known to be symmetric positive definite \cite[Proposition 1]{kolathaya2020aut}. Note that here $B_h: Q \to \mathbb{R}^{1\times m} $ is the mapping from $u$ to the joints, which is assumed to have full row rank (in other words, $h$ is assumed to be inertially coupled with $u$. This may not be satisfied for all $Q$, in which case a subset $Q_u \subset Q$ is chosen (for example, in the cart-pole, pole-angle is not inertially coupled with $u$ when it is horizontal). With this formulation, we have the following theorem.
\begin{theorem}
Consider a robotic system \eqref{eqn:roboticsys} and a kinematic safety constraint: $h : Q \to \R$.  Consider the dynamically consistent extended CBF for underactuated systems: 
\begin{eqnarray}
\label{eqn:dynCBFunder}
\widehat{h}_{D}(q, \dot{q}) := - \frac{1}{2} \dot{h}(q,\dot{q})^T D_h(q) \dot{h}(q,\dot{q}) +\alpha_e h(q)
\end{eqnarray}
with safe set: $\widehat{S}_D := \{ (q, \dot{q}) \in Q \times \R^k ~ : ~ \widehat{h}_D(q,\dot{q}) \geq 0\}$.  Then $\widehat{S}_D \subset \S$ and for all $(q,\dot{q}) \in \widehat{S}_D$ the following controller: 
\begin{gather}
\label{eqn:QProbotic_under}
\begin{split}
u^*(q,\dot{q},t) = & ~ \underset{u \in \R^m}{\operatorname{argmin}}  ~  ~ \| u - u_{\rm des}(q,\dot{q},t) \|^2  \\
~  \mathrm{s.t.}   ~ & - \frac{1}{2} \dot h \dot D_h \dot h - \dot h (- C_h \dq - G_h)  +  \alpha_e \dot h - \dot h B_h u \geq - \alpha(\widehat h_D(\q,\dq))
\end{split}
\end{gather}
guarantees forward invariance of $\widehat{S}_D$, i.e., safety of $\widehat{S}_D$. 
\end{theorem}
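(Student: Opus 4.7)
The plan is to mirror the proof of Theorem~\ref{thm:fully_actuated}, but working with the scalar reduced dynamics \eqref{eq:eliminateddynamicsunactuated} for $h$ in place of the full joint dynamics. Two things must be established: (i) the set inclusion $\widehat{S}_D \subset \S$, and (ii) forward invariance of $\widehat{S}_D$ under \eqref{eqn:QProbotic_under}, which I will reduce to verifying that $\widehat{h}_D$ is a valid CBF whose defining inequality coincides with the QP constraint.

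For (i) I would note that $D_h$, being the Schur complement of the symmetric positive definite matrix $D_e$, is itself symmetric positive definite; consequently $\tfrac{1}{2}\dot{h}^T D_h \dot{h} \geq 0$, so $\widehat{h}_D(q,\dot{q}) \geq 0$ forces $\alpha_e h(q) \geq 0$, i.e.\ $(q,\dot{q}) \in \S$.

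For (ii) I would differentiate $\widehat{h}_D$ along solutions,
\[
\dot{\widehat{h}}_D = -\dot{h}^T D_h \ddot{h} - \tfrac{1}{2} \dot{h}^T \dot{D}_h \dot{h} + \alpha_e \dot{h},
\]
and substitute $D_h \ddot{h} = -C_h \dot{q} - G_h + B_h u$ from \eqref{eq:eliminateddynamicsunactuated}. The resulting expression is exactly the left-hand side of the QP constraint, so the constraint is precisely $\dot{\widehat{h}}_D \geq -\alpha(\widehat{h}_D)$. Note that, unlike in the fully-actuated case, no skew-symmetry of $\dot{D}_h - 2C_h$ is invoked, because the Schur-complement reduction has already collapsed the quadratic $\dot{q}$ terms into the scalar CBF dynamics.

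To finish, I must confirm feasibility of the QP on $\widehat{S}_D$, which is what certifies $\widehat{h}_D$ as a CBF. The coefficient of $u$ in the constraint is the row vector $-\dot{h}^T B_h$. When $\dot{h} \neq 0$, the full-row-rank hypothesis on $B_h$ (valid on $Q_u$) gives $\dot{h}^T B_h \neq 0$, so the affine inequality in $u$ admits solutions. The delicate case is $\dot{h} = 0$, where every $\dot{h}$-dependent term on the left-hand side vanishes, leaving $0 \geq -\alpha(\widehat{h}_D)$; this holds because $\widehat{h}_D(q,\dot{q}) = \alpha_e h(q) \geq 0$ on $\widehat{S}_D$ and $\alpha$ is extended class $\classK$. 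Lemma~\ref{lem:explicitCBFsol} and Theorem~\ref{thm:cbf} then deliver a Lipschitz closed-form solution to the QP that renders $\widehat{S}_D$ forward invariant. The main obstacle I expect is not in any of these steps individually, but in the bookkeeping surrounding the coordinate change itself: confirming that \eqref{eq:eliminateddynamicsunactuated} legitimately plays the role of a mechanical dynamics for $h$ in which $D_h \succ 0$, and that the argument restricts cleanly to $Q_u$, where $B_h$ retains full row rank so the CBF condition cannot silently degenerate.
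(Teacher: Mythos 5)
Your proposal is correct and follows essentially the same route as the paper's proof: differentiate $\widehat{h}_D$, substitute the Schur-complement reduced dynamics \eqref{eq:eliminateddynamicsunactuated} so the QP constraint is exactly $\dot{\widehat{h}}_D \geq -\alpha(\widehat{h}_D)$, verify feasibility in the degenerate case $\dot{h}=0$ (where the left-hand side vanishes and $\widehat{h}_D = \alpha_e h \geq 0$ on $\widehat{S}_D$), and invoke the standard CBF machinery of Lemma~\ref{lem:explicitCBFsol} and Theorem~\ref{thm:cbf}. You simply spell out two points the paper leaves terse---the inclusion $\widehat{S}_D \subset \S$ via positive definiteness of $D_h$ and the role of the full-row-rank assumption on $B_h$ when $\dot{h}\neq 0$---which is consistent with the paper's stated assumptions.
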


\begin{proof}
Differentiating $\widehat h$ yields: 
\begin{align}
    \dot {\widehat h}_D = - \frac{1}{2} \dot h \dot D_h \dot h - \dot h (- C_h \dq - G_h) +  \alpha_e \dot h - \dot h B_h u .
\end{align}
It can be verified that if $\dot h=0$, then the inequality in \eqref{eqn:QProbotic_under} is satisfied. The safety property follows directly.
\end{proof}

\begin{remark}
Similar to Remark \ref{rem:model-free}, we can eliminate some of the model-based terms in \eqref{eqn:QProbotic_under}. Specifically, we can replace the constraint in the QP with the following constraint:
\begin{gather}
    - \frac{1}{2} c_l \dot h^2 - c_u |\dot h| (|\dq|^2+1)    + \alpha_e   \dot h - \dot h B_h  u \geq - \alpha(- c_u \dot h^2 + \alpha_e h(x)), \nonumber
\end{gather}
where $c_l,c_u$ are constants that bound the norms:
    $c_l \leq \|D_h\| \leq c_u$,  $\|C_h\|\leq c_u |\dq|$, $\|G_h\|\leq c_u$.
We have used the same notations (see Remark \ref{rem:model-free}) for convenience.
Note that these bounds may not exist for all $(\q,\dq)\in Q\times \R^k$, and they are dependent on the validity of the coordinate transformation $\Phi$. More details on the bounds are in \cite{kolathaya2020aut}.
\end{remark}

\begin{example}[\textbf{Cart-Pole System}]
To demonstrate these concepts, we consider the cart-pole system with two states, the cart position $x$ and the pole angle $\theta$. The system is actuated through a force input $u$ applied to the cart, which moves freely in a line. 
The safety constraint is to ensure that pole remains mostly upright, with $\theta \in [\frac{5\pi}{6}, \frac{7\pi}{6}]$. 
We choose $w = x$ and $h = (\frac{\pi}{6})^2-(\theta-\pi)^2$.
The results of the applying the QP \eqref{eqn:QProbotic_under} is shown in Figure  \ref{fig:cart_pole}.
\end{example}

\begin{figure}
    \centering
    \includegraphics[trim={0cm 2cm 0cm .5cm}, clip, width=.45\textwidth]{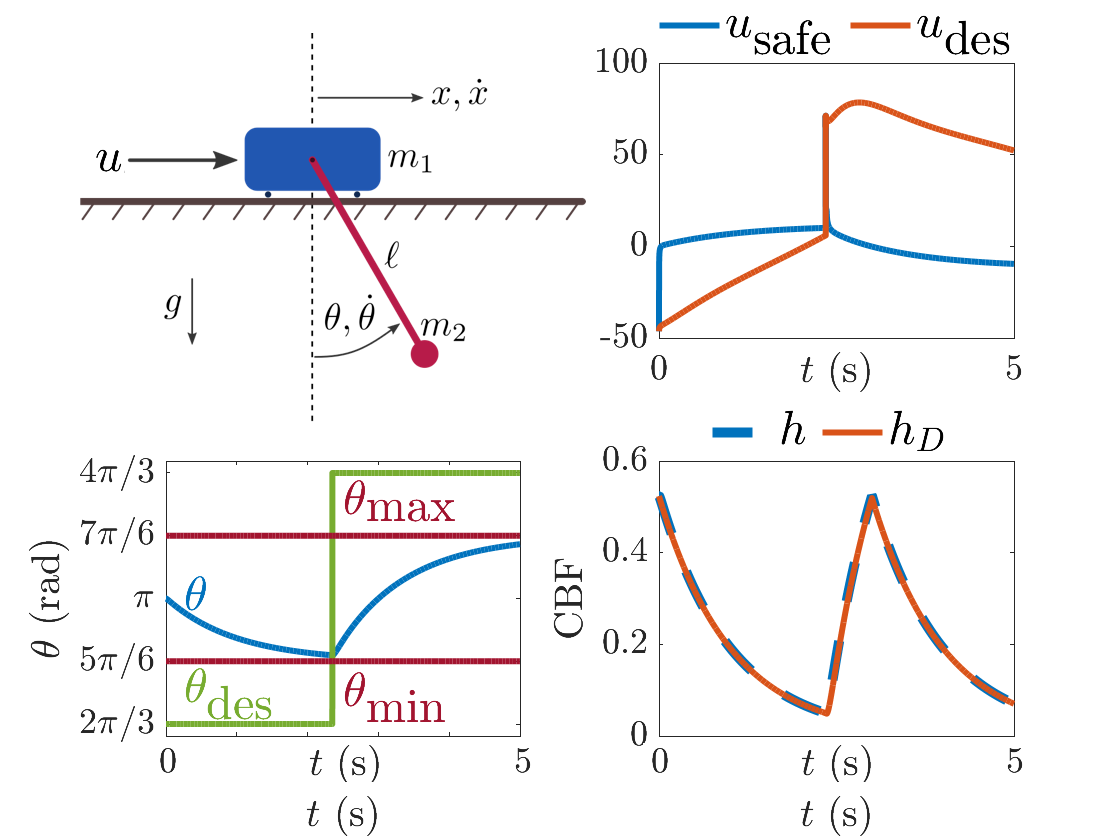}
    \caption{Cart-pole system with energy-based CBF.}
    \label{fig:cart_pole}
    \vspace{-0.5cm}
\end{figure}

\bibliographystyle{IEEEtran}
\bibliography{bib/Paper,bib/barrier,bib/kinematic}

\end{document}